\definecolor{myred}{gray}{0} 
\definecolor{light-gray}{gray}{0.7}
\newenvironment{subfigures}
 {\begin{minipage}{\columnwidth}\def\@captype{figure}\centering}
 {\end{minipage}}
\newtheorem{theorem}{Theorem}[section]
\theoremstyle{remark}
\newtheorem{remark}[theorem]{Remark}
\theoremstyle{definition}
\newcommand{\vect}[1]{\textrm{\boldmath${#1}$}} 
\newcommand{\Vect}[1]{{\bf #1}} 
\newcommand\pd[2]{\dfrac{\partial {#1}}{\partial {#2}}} 
\newcommand\Vel{\vect{u}}   
\newcommand\vel{u}          
\newcommand\Var{\vect{w}}   
\let\originalleft\left
\let\originalright\right
\renewcommand{\left}{\mathopen{}\mathclose\bgroup\originalleft}
\renewcommand{\right}{\aftergroup\egroup\originalright}
\numberwithin{equation}{section}    
\date{\today}             
\title{Projective Integration Schemes \\ for Hyperbolic Moment Equations}
\author{
Julian Koellermeier\footnote{Corresponding author, email address {\tt julian.koellermeier@kuleuven.be}} \footnote{Department of Computer Science, KU Leuven},
Giovanni Samaey\footnotemark[\value{footnote}]
}
\begin{document}

\maketitle


\begin{abstract}
In this paper, we apply projective integration methods to hyperbolic moment models of the Boltzmann equation and the BGK equation, and investigate the numerical properties of the resulting scheme. Projective integration is an explicit scheme that is tailored to problems with large spectral gaps between slow and (one or many) fast eigenvalue clusters of the model. The spectral analysis of a linearized moment model clearly shows spectral gaps and reveals the multi-scale nature of the model for which projective integration is a matching choice. The combination of the non-intrusive projective integration method with moment models allows for accurate, but efficient simulations with significant speedup, as demonstrated using several 1D and 2D test cases with different collision terms, collision frequencies and relaxation times.

\textbf{Keywords:} kinetic theory, hyperbolic moment model, BGK, Boltzmann equation, asymptotic-preserving, projective integration
\end{abstract}

\section{Introduction}
\label{sec:intro}

Kinetic equations are widely used for applications in science and engineering \cite{Aoki2007,Bouffanais2016}, e.g., for the description of flows under rarefied conditions \cite{Au2001,Boyd2003,Westerkamp2012}.
The collision term on the right-hand side of the kinetic equation can be a full Boltzmann collision operator or some simplified version, e.g., the BGK collision operator. The collision operator is multiplied by a potentially large collision frequency that leads to models that are stiff close to equilibrium, where the model should converge to the hydrodynamic limit. This stiffness is the reason for a severe time step constraint that needs to be overcome, especially as the limiting Euler equations do not exhibit such a time step constraint.
It is thus necessary to employ an asymptotic-preserving numerical scheme \cite{Jin2010} for which the computational complexity is bounded when approaching the typically stiff hydrodynamic limit.
Some numerical schemes mitigate this stiffness by using a splitting algorithm \cite{Cai2013,Tcheremissine2001,Xu2010}. This is useful when considering semi-lagrangian schemes, see \cite{Dimarco2018}. In the Eulerian setting
however, splitting is difficult for higher-order schemes and an exact solution of the split collision term is only possible for certain (linear, or simplified) collision operators. It is thus not feasible for the full Boltzmann collision operator, which is non-linear.
We do not want to use implicit asymptotic preserving schemes such as IMEX \cite{Pareschi2005} or the implicit Galerkin method in \cite{Grohs2017} as they increase the runtime and do not correspond to the hyperbolic nature of our system of equations.
It is furthermore important for the method to be as little intrusive as possible to allow for broad applicability.

Projective integration (PI) was successfully demonstrated as a stable scheme for discrete velocity models (DVM) \cite{Lafitte2016,Lafitte2010,Melis2019,Melis2016}, in which the velocity variable was discretized based on point values.
Note that, in part of the literature, the acronym DVM is only used for models that mimic the basic properties of the kinetic equation including convergence to the fluid limit. 
We refer to the lecture notes \cite{cabannes1980} for more background information.
Projective integration takes a few small time steps with a time step size corresponding to the (stiff) small scale of the model to damp the fast modes, before extrapolation using a large CFL-type time step size. The slow modes then are treated with sufficient accuracy while the fast modes do not spoil the stability. PI schemes were applied to this class of DVM kinetic models successfully, which points towards the potential to be used for other models as well. The PI method is non-intrusive in the sense that it requires minimal changes in the implementation of the numerical time-stepping method and only needs some necessary information about the spectrum of the model to deal with the present slow and fast modes. 

The DVM needs a fine discretization of velocity space \cite{Baranger2012,Mieussens2000,Dubroca1999}.
One reason for that is the poor approximation quality of the distribution function for small number of discrete velocities. Another reason is the fact that the mean velocity as well as the temperature of the fluid vary throughout the simulation and the global velocity grid needs to cover all cases.
An alternative is moment models that result in hierarchical systems of equations with only few variables \cite{Grad1949,Struchtrup2006,Torrilhon2016}. The general idea of a moment model is not to discretize the distribution function of the kinetic equation using point values of the microscopic velocity space but using higher order moments of the distribution function. This allows for the reduction to a small set of explicit equations with direct physical insight.
For a long time, lack of hyperbolicity was a major disadvantage and many alternative models where not computationally efficient, e.g., the maximum entropy models \cite{Levermore1996,McDonald2013}. The recent development of hyperbolic moment models, however, solved this problem and led to many promising models and applications \cite{Cai2013,Cai2014a,Fan2016,Koellermeier2014}. This made moment models accessible for further improvement by combination with other high-fidelity numerical methods. The final goal is to develop a numerical solver that uses moment models bridging from the continuum to the rarefied regime with the help of appropriate switching criteria, e.g., as suggested in \cite{Lockerby2009}.

In this paper, we will use projective integration schemes for different moment models to demonstrate the capability of overcoming stiffness near equilibrium and achieve significant speedup in comparison to the forward Euler time-stepping scheme. This constitutes the first application of higher-order, explicit PI schemes for moment models.

The rest of the paper is organized as follows: In section \ref{sec:models} we introduce the hyperbolic moment models together with the Boltzmann and BGK collision operators. The different PI schemes used in this paper are described in section \ref{sec:Num}. We analyze the stability of the involved moment models using a linearized moment model and a variety of collision terms that vary in their collision frequency and relaxation time in section \ref{sec:LSA}. That way the application of the proper PI scheme will be made possible for a wide range of test cases. We give explicit guidelines how to choose the parameters and apply the different schemes to a 1D shock tube test case, a 1D two-beam test case and a 2D forward facing step test case in section \ref{sec:NumEx}. 
The stability and speedup of the PI schemes during the simulations will be shown using our results and they open up many possibilities for further work. This work is the necessary step towards taking full advantage of moment models and accelerating the runtime of solution schemes near equilibrium. The long-term goal for future work is to exploit the hierarchical structure of moment models by means of an adaptive moment method that covers a range of moment models from equilibrium to the kinetic regime.



\section{Model equations}
\label{sec:models}
In this section, we introduce the kinetic equation and the moment model as an efficient discretization in velocity space. As the focus of this paper is not the precise form of the model equations but the applicability, parameter choice, and speedup of projective integration, we follow \cite{Koellermeier2017,Melis2019} and focus on a concise presentation of the models with only necessary details. We will describe the general multi-dimensional case where possible while showing several 1D test results and a 2D test case at the end of the paper.

\subsection{Kinetic equation}
We consider the evolution of the mass density distribution function $f(t,\vect{x},\vect{c})$ given by the kinetic transport equation \cite{Cercignani1994}
\begin{equation}
\label{e:BTE}
    \frac{\partial}{\partial t} f(t,\vect{x},\vect{c}) + \displaystyle \sum_{i=1}^D c_i \frac{\partial}{\partial x_i} f(t,\vect{x},\vect{c}) = \frac{1}{\tau}S(f),
\end{equation}
where $\vect{x} \in \mathbb{R}^D$ and $\vect{c} \in \mathbb{R}^D$ denote position and microscopic velocity, respectively. Two right-hand side collision operators $S(f)$ are specified in the next sections. The collision operator typically drives the distribution function closer towards the equilibrium Maxwellian $f_{\text{Maxwell}}$ given by
\begin{equation}
\label{e:Maxwellian}
    f_{\text{Maxwell}}(t,\vect{x},\vect{c})=\frac{\rho(t,\vect{x})}{\sqrt{2\pi\theta(t,\vect{x})}^D}\exp\left(
    -\frac{|\vect{c}-\Vel(t,\vect{x})|^2}{2\theta(t,\vect{x})} \right).
\end{equation}

The macroscopic quantities density $\rho(t,\vect{x})$, velocity $\Vel(t,\vect{x})$, and temperature $\theta(t,\vect{x})$ can be obtained by integration of $f(t, \vect{x}, \vect{c})$ over velocity space as follows:
\begin{align}
\label{e:density}
    \rho(t,\vect{x}) &= \int_{\mathbb{R}^D}f(t, \vect{x}, \vect{c}) \, d\vect{c}, \\
\label{e:velocity}
    \rho(t,\vect{x})\Vel(t,\vect{x}) &= \int_{\mathbb{R}^D}\vect{c} f(t, \vect{x}, \vect{c}) \, d\vect{c}, \\
\label{e:temperature}
    \frac{D}{2}\rho(t,\vect{x})\theta(t,\vect{x}) + \frac{1}{2}\rho(t,\vect{x})|\Vel(t,\vect{x})|^2 &= \int_{\mathbb{R}^D}\frac{1}{2}|\vect{c}|^2f(t, \vect{x}, \vect{c}) \, d\vect{c}.
\end{align}

The parameter $\tau > 0$ can be seen as the dimensionless Knudsen number, i.e. the ratio of the particles' mean free path length and a reference length. The dimensionless Knudsen number $\tau$ is a measure for the relaxation time towards equilibrium and therefore defines the regime of the flow. From the kinetic regime ($\tau \geq 10^{-1}$) via the transitional regime ($\tau \in [10^{-4},10^{-1}]$) to the hydrodynamic regime ($\tau \leq 10^{-4}$) \cite{Melis2019}, the kinetic equation converges to the well-known Euler equations for ideal gases in the limit of infinitely small relaxation time $\tau \rightarrow 0$.

Macroscopic equations are already contained in \eqref{e:BTE}, which can be seen by multiplying with monomials $(1,\vect{c},|\vect{c}|^2/2)^T$ and integration of both sides over velocity space \cite{Struchtrup2006}. This leads to the well-known macroscopic conservation laws of mass, momentum and energy. In primitive variables and in non-conservative form the limiting macroscopic equations can be written as
\begin{align}
\label{e:cons_mass}
    \pd{\rho}{t} + \sum_{d=1}^D\pd{\rho \vel_d}{x_d} &=0, \\
\label{e:cons_momentum}
    \rho\pd{\vel_i}{t} + \sum_{d=1}^D\left( \rho \vel_d\pd{\vel_i}{x_d}+\pd{p_{i,d}}{x_d} \right) &= 0, \quad i = 1, \dots, D, \\
\label{e:cons_energy}
    \frac{D\rho}{2}\pd{\theta}{t} + \sum_{d=1}^D\left( \frac{D}{2}\rho \vel_d\pd{\theta}{x_d} + \pd{q_d}{x_d} \right) + \sum_{d=1}^D\sum_{k=1}^D p_{k,d}\pd{\vel_k}{x_d} &=0,
\end{align}
where the pressure is denoted as $p_{i,j}$, and the heat flux as $q_i$ for $i,j = 1, \dots, D$, using the definitions
\begin{align}
\label{e:pressure}
    p_{i,j} &= \int_{\mathbb{R}^D}f(t, \vect{x}, \vect{c})(c_i-\vel_i)(c_j-\vel_j) \, d\vect{c}, \\
\label{e:heat_flux_2}
    q_i &= \int_{\mathbb{R}^D}f(t, \vect{x}, \vect{c})|\vect{c}-\Vel|^2(c_i-\vel_i) \, d\vect{c}.
\end{align}
Note that the equations are not closed because pressure and heat flux require full knowledge of the distribution function. Closing the system by assuming an ideal gas law for the pressure and zero heat flux then results in the Euler equations.

Notice that the respective right-hand sides of equations \eqref{e:cons_mass}-\eqref{e:cons_energy} equal zero as the monomials $(1,\vect{c},|\vect{c}|^2/2)^T$ are so-called \emph{collision invariants}, for which the respective integrals of the collision operator vanish. Equations \eqref{e:cons_mass}-\eqref{e:cons_energy} describe the slow modes in our models, which propagate according to the macroscopic variables. When integrating the collision operator multiplied with higher order monomials, the right-hand side does not vanish and the higher order equations that are used to describe deviations from the equilibrium state then contain fast relaxing modes, as will be explained after a more detailed description of the collision operators.

\subsection{Boltzmann collision operator}
\label{sec:BTE}
For the Boltzmann collision operator, we only describe the 2D version, which will later be used in the numerical tests.
The Boltzmann collision operator models elastic binary collisions between particles with pre-collision velocities $\left(c', c'_1\right)$ and post-collision velocities $\left(c, c_1\right)$ \cite{Cercignani1994}. In the 2D setting, they can be related by \cite{Melis2019}
\begin{equation}
    \vect{c}' = \frac{\vect{c} + \vect{c}_1}{2} + \frac{|\vect{c} + \vect{c}_1|}{2}\vect{\sigma}, \quad\quad \vect{c}'_1 = \frac{\vect{c} + \vect{c}_1}{2} - \frac{|\vect{c} + \vect{c}_1|}{2}\vect{\sigma},
\end{equation}
with two-dimensional unit vector $\vect{\sigma}$, that points into the direction of the pre-collisional relative velocity $\vect{c}'_r = \vect{c}' - \vect{c}'_1$, such that
\begin{equation}
    \vect{\sigma} = \frac{\vect{c}'_r}{|\vect{c}'_r|}.
\end{equation}

The Boltzmann collision operator in 2D is then given by
\begin{equation}
\label{e:Boltzmann_collision}
    S(f) = \int_{\mathbb{R}^2} \int_{0}^{2\pi} \left(f' f'_1 - f f_1\right) B\left(|\vect{c}-\vect{c}_1|,\theta_{\sigma}\right) \, d\theta_{\sigma}d\vect{c}_1,
\end{equation}
where $f', f'_1$ are post-collision distribution functions and $f, f_1$ represent the pre-collision distribution functions, $\theta_{\sigma}$ is the angle between $c'_r$ and $\vect{\sigma}$, and $B\left(|\vect{c}_r|,\theta_{\sigma}\right)$ is the collision kernel. The numerical method in this paper can be used for different collision kernels. However, we assume pseudo-Maxwellian particles in this paper. The kernel then simplifies to $B\left(|\vect{c}-\vect{c}_1|,\theta_{\sigma}\right) = b_0$. Note that the choice of the collision kernel does influence the form of the Boltzmann collision operator but not its separation of fast and slow scales as the collision invariants are still the same. For more details on how to choose the collision kernel and how this would influences the choice of the numerical method later, we refer to \cite{Melis2019}.

The high-dimensional integral in equation \eqref{e:Boltzmann_collision} is expensive to evaluate computationally. This is especially problematic if point evaluations are needed, for example, in a DVM method \cite{Baranger2012,Mieussens2000}. For a moment model, however, the projected integrals can be evaluated offline beforehand leading to a speed-up of the collision term computation. More details about further speedup of the collision term can be found in \cite{Cai2014c,Cai2015a}.

When splitting equation \eqref{e:Boltzmann_collision} into a gain term and a loss term, the loss term includes as proportionality factor a collision frequency $\nu$, which can be computed explicitly for the pseudo-Maxwellian collision kernel as
\begin{equation}\label{e:collision_frequency}
  \nu = 2 \pi b_0 \rho,
\end{equation}
and will appear again in the simplified model described in the next section.

\subsection{BGK collision operator}
\label{sec:BGK}
A simpler collision model is the so-called BGK model \cite{Bhatnagar1954}, describing a relaxation towards the equilibrium Maxwellian \eqref{e:Maxwellian} as follows
\begin{equation}
\label{e:BGK}
    \frac{1}{\tau} S(f) = - \frac{\nu}{\tau} \left( f - f_{\text{Maxwell}} \right).
\end{equation}
Modifications of the model are possible, leading to the so-called ES-BGK or Shakov model \cite{Andries2000}. The collision frequency $\nu$ can be chosen in accordance with the collision frequency of the Boltzmann collision operator. When chosing $\nu = \rho$, the BGK model matches the loss term of the Boltzmann collision operator from above. A constant collision frequency $\nu=const$ leads to a simpler model. However, the model is not linear as the Maxwellian on the right-hand side contains the macroscopic moments of $f$.

Point evaluations of the BGK operator \eqref{e:BGK} are simpler than for the Boltzmann equation, but discrete values still need to ensure conservation of mass, momentum, and energy throughout the simulation by a special projection procedure. This will be much simpler for moment models, where the BGK operator become a linear, diagonal operator and can be explicitly derived beforehand.

\subsection{Hyperbolic moment models}
Moment models have a clear advantage over DVM models when it comes to the necessary number of variables, the evaluation of the collision operator, and the approximation quality. The reason is that standard DVM models need many variables and moment models can reduce the number of necessary variables drastically to the expense of a more complex, possibly non-linear model \cite{Torrilhon2016}. For more results on the accuracy and convergence of the moment models, the interested reader is referred to the literature, e.g., \cite{Bourgault2015,Cai2019,Sarna2020}.

To derive the additional equations for deviations from equilibrium, the distribution function is expanded around the local equilibrium using a sum of basis functions \cite{Grad1949} $\phi^{[\vect{\vel}(t,\vect{x}),\theta(t,\vect{x})]}_{\vect{\alpha}}$
\begin{equation}
\label{e:vars_expansion}
    f(t,\vect{x},\vect{c}) = \sum_{\vect{\alpha} \in \mathbb{M}} f_{\vect{\alpha}}(t,\vect{x}) \phi^{[\vect{\vel},\theta]}_{\vect{\alpha}}\left(\vect{\xi}\right),
\end{equation}
with coefficients $f_{\vect{\alpha}}(t,\vect{x})$, which are also called \emph{moments}, and weighted Hermite basis functions \cite{Cai2014a,Koellermeier2017} defined as
\begin{equation}
\label{e:vars_basis}
    \phi^{[\vect{\vel},\theta]}_{\vect{\alpha}}\left(\vect{\xi}\right) = \prod_{d=1}^{D}\frac{1}{\sqrt{2\pi \theta^{\alpha_d+1}}}  He_{\alpha_d}(\xi_d) \exp\left(-\frac{-\xi_d^2}{2}\right)
\end{equation}
for one-dimensional Hermite polynomials
\begin{equation}
\label{e:vars_Hermite}
    He_{\alpha_d}(\xi_d) = (-1)^k \exp \left(\frac{\xi_d^2}{2}\right)\frac{d^k}{dx^k} \exp \left(-\frac{\xi_d^2}{2}\right).
\end{equation}

The coefficients $f_{\vect{\alpha}}(t,\vect{x})$ in the ansatz \eqref{e:vars_expansion} use a multi-index $\vect{\alpha} \in \mathbb{M}$ from an index set $\mathbb{M} \in \mathbb{N}^ D$ that defines the used moment theory. According to \cite{Koellermeier2018,Torrilhon2015}, different moment theories are possible. We use the so-called \emph{full moments}, corresponding to using full tensors in a spherical harmonics expansion
\begin{equation}
    \label{e:ansatz_full}
    \mathbb{M} = \left\{ \vect{\alpha} \in \mathbb{N}^D, |\vect{\alpha}| \leq M \right\},
\end{equation}
which have the benefit to be rotationally invariant in a multi-dimensional setting.
The transformed velocity $\vect{\xi}$ allows for an efficient discretization in velocity space \cite{Kauf2011} and is denoted as
\begin{equation}
\label{e:vars_transformed_velocity}
    \vect{\xi} = \frac{\vect{c}-\vect{\vel}}{\sqrt{\theta}}.
\end{equation}

The basis coefficients $f_{\vect{\alpha}}(t,\vect{x})$ in expansion \eqref{e:vars_expansion} depend only on $t$ and $x$, but no longer on the transformed microscopic velocity $\vect{\xi}$, which is solely encoded in the basis function. In the following we outline the derivation of evolution equations for the coefficients $f_{\vect{\alpha}}(t,\vect{x})$.

%

By ensuring that the expanded distribution function \eqref{e:vars_expansion} fulfills \eqref{e:density}-\eqref{e:temperature}, we get $D+2$ additional algebraic equations, the so-called compatibility conditions. These conditions ensure that the distribution function has the correct density, momentum and energy. With the chosen basis functions \eqref{e:vars_basis}, the compatibility conditions can be simplified according to \cite{Koellermeier2014a} and read
\begin{equation}
    \label{e:vars_comp}
    f_{\vect{0}} = \rho, \quad  f_{\vect{e}_j} = 0, \quad j = 1,\ldots,D, \quad \sum_{d=1}^{D} f_{2\vect{e}_d} = 0,
\end{equation}
for the $j$-th unit vector $\vect{e}_j \in \mathbb{N}^D, (\vect{e}_j)_i = \delta_{i,j}, ~i,j = 1,\ldots,D$.

We directly set $f_{\vect{0}} = \rho$ and $f_{\vect{e}_j}=0, ~j = 1,\ldots,D$. The last equation $\sum_{d=1}^{D} f_{2\vect{e}_d} = 0$ is automatically fulfilled by considering the pressure tensor $\vect{p}$ computed analogously to \eqref{e:pressure} in the form
\begin{equation}
\label{e:vars_pressure}
    p_{\vect{e}_i+\vect{e}_j} = \delta_{i,j} \theta + (1+\delta_{i,j}) f_{\vect{e}_i+\vect{e}_j}
\end{equation}
and then using the variables
\begin{align}\label{e:vars_newvars}
    \frac{p_{2\vect{e}_i}}{2} & \text{ instead of } f_{2\vect{e}_i}, \quad \text{ for } i = 1,\ldots,D, \\
    p_{\vect{e}_i+\vect{e}_j} & \text{ instead of } f_{\vect{e}_i+\vect{e}_j}, \quad \text{ for } i,j = 1,\ldots,D, i \neq j.
\end{align}
For more details, we refer to \cite{Koellermeier2018}.


Below we exemplify the two-dimensional and the one-dimensional cases, which will be used in the simulations in section \ref{sec:NumEx}.

%

\subsubsection{Two-dimensional moment model}
In the two-dimensional case, the full moment ansatz result in the following variable vector $\Var_M=\Var_3$ for $M=3$, which was used, e.g., in the simulations in \cite{Koellermeier2018}
\begin{equation}
\label{e:vars_full}
    \Var_{3} = \left( \rho, \vel_x, \vel_y, \frac{p_1}{2}, f_{1,1}, \frac{p_2}{2}, f_{3,0}, f_{2,1}, f_{1,2}, f_{0,3}\right)^T.
\end{equation}
for $\frac{p_1}{2} = \frac{\rho \theta}{2}+f_{2,0}$, $\frac{p_2}{2} = \frac{\rho \theta}{2}+f_{0,2}$ and $f_{i,j} = f_{i\vect{e}_1+j\vect{e}_2}$.

A closed set of equations is then derived by inserting the ansatz \eqref{e:vars_expansion} into the kinetic equation \eqref{e:BTE} and projecting onto the proper Hermite test functions
\begin{equation}
\label{e:vars_system}
    \frac{\partial \Var_M}{\partial t}  + \Vect{A}_{x} \frac{\partial \Var_M}{\partial {x}} + \Vect{A}_{y} \frac{\partial \Var_M}{\partial {y}} = \Vect{S}(\Var_M),
\end{equation}
where $\Vect{S}(\Var_M)$ results from the projection of the right-hand side collision operators from sections \ref{sec:BTE} and \ref{sec:BGK}, and $\Var_M \in \mathbb{R}^{|\mathbb{M}|}$ is the vector of unknown variables depending on the specific moment theory. The terms in equation \eqref{e:vars_system} can be found in Appendix \ref{app:2D_QBME}. We refer to \cite{Koellermeier2018} and the implementation in \cite{Koellermeier2020b} for more details.

\subsubsection{One-dimensional moment model}
In the one-dimensional case, the compatibility conditions reduce to
\begin{equation}
    \label{e:vars_comp1D}
    f_0 = \rho, \quad  f_1 = 0,  \quad f_2 = 0,
\end{equation}
which leads to the following vector of unknown variables
\begin{equation}
    \label{e:vars_full_1D}
    \Var_M = \left( \rho, \vel, \theta, f_3, \ldots, f_M \right)^T.
\end{equation}
The moment equations read
\begin{equation}
\label{e:vars_system1D}
    \frac{\partial \Var_M}{\partial t}  + \Vect{A} \frac{\partial \Var_M}{\partial x}  = \Vect{S}(\Var_M),
\end{equation}
where the collision term $\Vect{S}(\Var_M)$ for the simple BGK model is given by \cite{Koellermeier2017}
\begin{equation}
\label{e:BGK_term}
    \Vect{S}(\Var_M)  = - \frac{1}{\tau} \left( 0,0,0, f_3, \ldots, f_M \right)^T,
\end{equation}
and the system matrix by
\begin{equation}
\label{e:QBME_A}
\Vect{A} = \setlength{\arraycolsep}{1pt}
\left(
  \begin{array}{cccccccc}
    \vel & \rho &  &  &   &   &   &  \\
    \frac{\theta}{\rho} & \vel & 1 &  &  &   &   &  \\
    & 2 \theta & \vel & \frac{6}{\rho} &  &  &   &   \\
    & 4 f_3 & \frac{\rho \theta}{2} & \vel & 4 &  &  &  \\
    -\frac{\theta f_3}{\rho} & 5f_4 & \frac{3f_3}{2} & \theta & \vel & 5 &  &    \\
    \vdots & \vdots & \vdots & \vdots &  \ddots & \ddots & \ddots &  \\
    -\frac{\theta f_{M-2}}{\rho} & Mf_{M-1} & \frac{\left(M-2\right)f_{M-2}+ \theta f_{M-4}}{2} \textcolor{myred}{-\frac{M(M+1)f_{M}}{2 \theta}}& -\frac{3f_{M-3}}{\rho}  &  & \theta & \vel & M \\
    -\frac{\theta f_{M-1}}{\rho} & (M\hspace{-0.1cm}+\hspace{-0.1cm}1) f_{M} & \textcolor{myred}{-f_{M-1}} \hspace{-0.1cm} + \hspace{-0.1cm} \frac{\theta f_{M-3}}{2} & \textcolor{myred}{\frac{3(M+1)f_{M}}{\rho \theta}} \hspace{-0.1cm} - \hspace{-0.1cm} \frac{3f_{M-2}}{\rho} &  &  & \theta & \vel \\
  \end{array}
\right),
\setlength{\arraycolsep}{6pt}
\end{equation}

Note that we use the hyperbolic regularization called QBME, which was developed in \cite{Koellermeier2017,Koellermeier2014}, to obtain global hyperbolicity. The standard model \cite{Grad1949} does not yield hyperbolic equations.
As a result of the hyperbolic fix, we can explicitly evaluate the real eigenvalues of the system, see also \cite{Fan2016}. In the 1D example, they are given by the shifted and scaled roots of the Hermite polynomials of degree $M+1$, with $M$ being the highest degree within the expansion \eqref{e:vars_expansion}
\begin{equation}\label{e:HME_EV}
    \lambda_{i} = \vel + \sqrt{\theta} \, c_{i},~ i=1, \ldots, M+1,
\end{equation}
where the $c_{i}$ are the Hermite roots $\textrm{He}_{M+1}(c_{i}) = 0$.
\section{Numerical Method}
\label{sec:Num}
The moment models introduced in the previous section are characterized by a hyperbolic transport part with bounded propagation of information given by the eigenvalues and a possibly stiff right-hand side collision term, depending on the collision frequency $\nu$ and the relaxation time $\tau$ of the collision operator. We are interested in stable solutions of the model equations \eqref{e:vars_system},\eqref{e:vars_system1D} for small values of the relaxation time $\tau$ \eqref{e:BGK}. In this section, we will briefly discuss the spatial discretization and then describe the PI schemes used to overcome the stiffness of the collision operators.

\subsection{Path-conservative spatial discretization}
Due to the hyperbolic regularization of the equations, the resulting moment model \eqref{e:vars_system} contains specific terms that are added to the higher-order equations, see \cite{Koellermeier2017} for details. In turn, the left hand side of the system can no longer be written in conservative form. This gives rise to a partially-conservative system, where the first equations can be written using a flux function and the last equations are given in non-conservative form only. The non-conservative terms in this paper are discretized using a path-conservative scheme, which computes the occurring generalized Roe matrix based on a linear path connecting the left and right states of the computational cell \cite{DalMaso1995}. The method has been used in many applications \cite{Castro2017,Castro2008,Castro2012} and especially for moment models in \cite{Cai2013,Koellermeier2017a,Koellermeier2018}. It was found that the non-conservative terms do not spoil stability or accuracy of the model equations, despite the problems occurring for other non-conservative models \cite{Abgrall2010}.

After the non-conservative terms are discretized, we decide on the numerical flux. Due to the different test cases and implementations, we use two different numerical fluxes depending on the model:

The 1D test cases are performed on a equidistant grid with constant cell size $\Delta x$ and use a dedicated high-order CWENO reconstruction up to third order in space and the FORCE scheme \cite{Toro2000,Castro2006,Cravero2018,Koellermeier2020}.

The 2D test cases are performed on a non-uniform quadrilateral grid and use the first-order PRICE scheme \cite{Canestrelli2008,Canestrelli2010}. For more details we refer to the respective references and the implementation \cite{Koellermeier2020b}, which is based on the developments for \cite{Koellermeier2020}. Here we will only assume that the spatial discretization leads to a semi-discrete time integration problem of the following form
\begin{equation}\label{e:semidiscrete}
  \frac{\partial \Var_M}{\partial t} = D_t\left(\Var_M\right), \quad D_t\left(\Var_M\right) = - D_x\left(\Var_M\right) - D_y\left(\Var_M\right) + \frac{1}{\tau} S\left(\Var_M\right),
\end{equation}
where the two terms $D_x, D_y$ are the result of the spatial discretization in the respective direction and the last term represents the point evaluation or integral of the collision operator within the respective cells. 

\begin{remark}
    The two numerical methods for the 1D and the 2D test cases are not the same. One advantage of the Projective Integration method mentioned in the next section is that it can readily be applied to any existing spatial discretization with minor additional modifications, despite, e.g., non-uniform grids or higher-order reconstructions. It is therefore in the spirit of this work to highlight the applicability of the Projective Integration method and consider different numerical schemes.
\end{remark}

\begin{remark}
    It is important to emphasize again that the QBME are non-linear and non-conservative. As a consequence, their relevance can be questioned for discontinuous solutions since Rankine-Hugoniot jump relations cannot be derived, except by the use of vanishing regularisation, see \cite{LeFloch1996}. As mentioned in \cite{Abgrall2010} their numerical approximation is still an open question, even with path-conservative methods like PRICE.
    Discontinuous solutions typically arise if the number of equations is too small and the flow conditions model strong non-equilibrium, as mentioned in \cite{Cai2013}. In this paper, we are interested in the solutions relatively close to the fluid dynamic limit in equilibrium. In these situations, the solutions will typically be smooth in the non-conservative variables as those are the non-equilibrium variables and tend to zero in equilibrium. For numerical tests in the non-equilibrium regime, we refer to \cite{Cai2013,Koellermeier2017a}.
\end{remark}

\subsection{Projective integration}
For small values of the relaxation time $\tau$, the semi-discrete system \eqref{e:semidiscrete} is characterized by slow macroscopic scales related to the macroscopic transport properties and (one or more) fast microscopic scales related to the relaxation of higher order moments. The dynamics of the higher order moments pose a severe time step restriction of $\Delta t \sim \tau$, while the macroscopic variables would be efficiently integrated using a CFL-type time step size $\Delta t \sim \Delta x$, which is independent of $\tau$. Especially towards the limit $\tau \rightarrow 0$, a standard forward Euler discretization of equation \eqref{e:semidiscrete} would require more and more time steps and practically become infeasible.

Projective Integration (PI) is a time stepping scheme consisting of an inner integrator and an extrapolation step \cite{Lafitte2010,Melis2017}. It is typically used for stiff problems and overcomes the stiff time step restriction by first iterating a few small time steps with time step size $\delta t \sim \tau$ corresponding to the fast relaxation speed of the fast modes and then extrapolating the result over a large time step corresponding to a CFL type time step of the slow modes. Note that according to \cite{Melis2019}, PI is not asymptotic-preserving in the sense of \cite{Jin2010}, as the limit $\epsilon \to 0$ leads to vanishing $\delta t \to 0$ so that the inner time step will never advance. However, simulations with arbitrarily small $\epsilon$ can be performed and in most cases the cost of the scheme does not depend on the stiffness of the problem. This is closely related to the asymptotic-preserving property. The interested reader is referred to \cite{Melis2019} for more details.

The first order PI scheme uses the standard forward Euler method as inner integrator with time step size $\delta t$ for $K+1$ steps
\begin{equation}
  \Var_M^{n,k+1} = \Var_M^{n,k} + \delta t D_t\left(\Var_M^{n,k}\right), \quad k=0,1,\ldots, K.
\end{equation}

After the $K+1$ inner steps a discrete time derivative using the last two values is obtained and used in an outer step to compute the value at the new time step $\Var_M^{n+1}$ via extrapolation in time
\begin{equation}
  \Var_M^{n+1} = \Var_M^{n,K+1} + \left(\Delta t - (K+1) \delta t \right) \frac{\Var_M^{n,K+1} - \Var_M^{n,K}}{\delta t}.
\end{equation}

This method is called the Projective Forward Euler (PFE) method. The parameters of the PFE method are the inner time step size $\delta t$ and the number of inner time steps $K+1$ in addition to the outer time step size $\Delta t$.

According to \cite{Melis2017} the stability domain of the PFE method for the standard model equation $\partial_t w = \lambda w, \lambda \in \mathbb{C}$ is characterized by
\begin{equation}
\label{e:stability_conditionPFE}
  \left| \left( 1 + \left( \frac{\Delta t}{\delta t} - K \right) \lambda \delta t \right) \left( 1 + \lambda \delta t  \right)^K \right| \leq 1.
\end{equation}
The stability condition \eqref{e:stability_conditionPFE} is fulfilled for eigenvalues $\lambda$ within the union of the two discs
\begin{equation}
\label{e:stability_conditionPFE_lambda}
  \lambda \in \mathcal{D}\left( -\frac{1}{\Delta t},\frac{1}{\Delta t} \right) \cup \mathcal{D}\left( -\frac{1}{\delta t},\frac{1}{\delta t}\frac{\delta t}{\Delta t}^{\frac{1}{K}} \right),
\end{equation}
where $\mathcal{D}\left( c, r \right) \in \mathbb{C}$ denotes the disc with center $(c,0)$ and radius $r$ in the complex plane.

The stability domain containing the two discs from equation \eqref{e:stability_conditionPFE_lambda} is shown exemplarily for $\delta t =10^{-4}$, $\Delta t = 10^{-3}$, and $K=1$ in figure \ref{fig:stabilitydomain}. The PFE method is ideally suited for stable integration of models including a scale separation into one fast and one slow cluster as indicated.
\begin{figure}[htb!]
    \centering
    \includegraphics[width=0.85\linewidth]{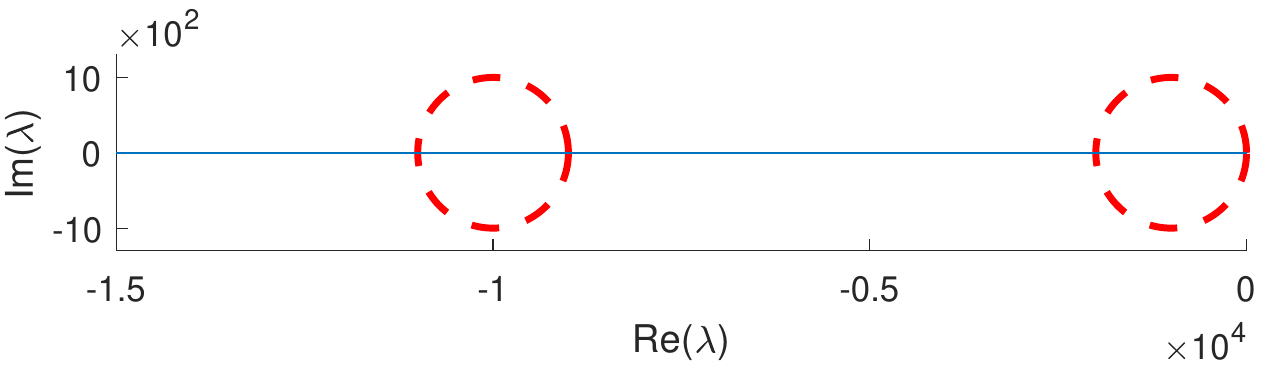}
    \caption{Stability domain \eqref{e:stability_conditionPFE_lambda} of PFE method for $\delta t =10^{-4}$, $\Delta t = 10^{-3}$, $K=1$.}
    \label{fig:stabilitydomain}
\end{figure}

While the accuracy of the inner integrator is not of interest for the overall accuracy of the scheme, the outer integrator can be generalized to a Runge-Kutta method, see \cite{Lafitte2016}. We thus employ a standard $S+1$-stage Runge-Kutta method with parameters $\vect{A} \in \mathbb{R}^{S+1 \times S+1}$, $\vect{c} \in \mathbb{R}^{S+1}$, $\vect{b} \in \mathbb{R}^{S+1}$. The result is a Projective Runge-Kutta scheme (PRK). Each stage $s$ is the result of a PFE iteration. A PRK scheme consists of $S+1$ outer stages, which each include $K+1$ small inner time steps of size $\delta t$ and one subsequent extrapolation step of the remaining time step $c_S \Delta t - (K+1)\delta t$.

The first stage's slope is
\begin{equation}
    s=0: \left\{
                \begin{array}{lcl}
                  \Var_M^{n,k+1} &=& \Var_M^{n,k} + \delta t D_t\left(\Var_M^{n,k}\right), \quad 0 \leq k \leq K \\
                  \vect{k}_1 &=& \frac{\Var_M^{n,K+1}-\Var_M^{n,K}}{\delta t}
                \end{array}
              \right.
\end{equation}
And the other stages are subsequently computed as
\begin{equation}
    2 \leq s \leq S: \left\{
                \begin{array}{lcl}
                  \Var_M^{n+c_s,0} &=& \Var_M^{n,K+1} + \left( c_s \Delta t - (K+1)\delta t \right) \displaystyle\sum_{l=1}^{s-1} \frac{a_{s,l}}{c_s} \vect{k}_l \\
                  \Var_M^{n+c_s,k+1} &=& \Var_M^{n+c_s,k} + \delta t D_t\left(\Var_M^{n+c_s,k}\right), \quad 0 \leq k \leq K \\
                  \vect{k}_s &=& \frac{\Var_M^{n+c_s,K+1}-\Var_M^{n+c_s,K}}{\delta t}
                \end{array}
              \right.
\end{equation}
The new time step is then extrapolated to
\begin{equation}
    \Var_M^{n+1} = \Var_M^{n,K+1} + (\Delta t - (K+1)\delta t) \sum_{s=1}^{S} b_s \vect{k}_s.
\end{equation}
In this paper we employ a third-order PRK3 scheme, which is based on the third-order strong stability preserving Runge-Kutta (SSPRK3) scheme and a second-order PRK2 scheme, which uses the Heun method. All PRK methods used here also have as parameters the inner time step size $\delta t$ and the number of inner time steps $K+1$ in addition to the outer time step size $\Delta t$.

The stability domain of a PRK scheme is similar to the one for the PFE scheme in equation \eqref{e:stability_conditionPFE_lambda}, but not disc-shaped \cite{Lafitte2016}. The stability domain of a PRK scheme contains the PFE stability domain, such that all parameter choices for a stable PFE method also lead to a stable PRK method.

\subsection{Telescopic projective integration}
If there is no clear separation in one cluster of fast modes and one cluster of slow modes, because there are either multiple clusters or there is an extended spectrum of modes (i.e. covering a dense distribution of eigenvalues over a wide range), multiple telescopic levels of PI can be constructed \cite{Melis2019,Melis2016}. While the method can be generalized for arbitrary number of telescopic levels, we focus on one additional intermediate level and describe a Telescopic Projective Forward Euler (TPFE) method with two projective levels for conciseness.

The innermost level performs $K_0+1$ innermost time steps with innermost time step size $\delta t_0$
\begin{equation}
  \Var_M^{n,k_1,k_0+1} = \Var_M^{n,k_1,k_0} + \delta t_0 D_t\left(\Var_M^{n,k_1,k_0}\right), \quad k_0=0,1,\ldots, K_0.
\end{equation}
and extrapolates to the intermediate level
\begin{equation}
  \Var_M^{n,k_1+1,0} = \Var_M^{n,k_1,K_0+1} + \left(\delta t_1 - (K_0+1) \delta t_0 \right) \frac{\Var_M^{n,k_1,K_0+1} - \Var_M^{n,k_1,K_0}}{\delta t_0}, \quad k_1=0,1,\ldots, K_1.
\end{equation}

The intermediate level finally extrapolates to the next time step
\begin{equation}
  \Var_M^{n+1} = \Var_M^{n,K_1+1,0} = \Var_M^{n,K_1,0} + \left(\Delta t - (K_1+1) \delta t_1 \right) \frac{\Var_M^{n,K_1+1,0} - \Var_M^{n,K_1,0}}{\delta t_1}.
\end{equation}

The respective extrapolation sizes $N_0,N_1$ are defined via
\begin{equation}
      N_0 \delta t_0 = \delta t_1 - (K_0+1) \delta t_0, \quad
      N_1 \delta t_1 = \Delta t - (K_1+1) \delta t_1.
    \label{e:extrapolation_factor}
\end{equation}
This TPFE method uses as parameters the innermost time step size $\delta t_0$, the intermediate time step size $\delta t_1$ and the respective number of inner and intermediate steps $K_0+1, K_1+1$ in addition to the outer time step size $\Delta t$.

In comparison to the PFE method in figure \ref{fig:stabilitydomain}, the stability domain of the TPFE method includes one additional domain that can be placed depending on the parameters to achieve a stable integration of models with more than one fast eigenvalue cluster or an extended spectrum of eigenvalues along the negative real axis, see \cite{Melis2016} for more details.

\subsection{Computational Speedup of PI}
\label{sec:speedup}
PI is used to speed up simulations of moment models close to hydrodynamic equilibrium, where the stiffness of the model equation would normally require an extremely small time step size. Assuming that the extrapolation step can be neglected in comparison to the inner integrators, the speedup $S$ with respect to a standard FE method is given by the ratio of the number of total time steps over a unit time interval and can be computed according to \cite{Melis2019} for the different methods.

For a PFE or PRK method the speedup is given by
\begin{equation}
    S_{PFE} = \frac{\Delta t}{(K+1) \cdot \delta t}.
    \label{e:speedupPFE}
\end{equation}

And for a TPFE method assuming constant $K_l=K$
\begin{equation}
    S_{TPFE} = \prod_{l=0}^{L}\frac{N_l+K_l+1}{K_l+1} = \frac{\delta t_1}{\delta t_0} \cdot \frac{\delta t_2}{\delta t_1} \cdot \ldots \cdot \frac{\Delta t}{\delta t_{L-1}} \frac{1}{\left(K+1\right)^L}= \frac{\Delta t}{\left(K+1\right)^L \cdot \delta t_0}.
    \label{e:speedupTPFE}
\end{equation}

Typically, the macroscopic time step is chosen according to a CFL-type constraint as $\Delta t \leq \frac{CFL \cdot \Delta x}{c_{max}}$, where $c_{max}$ is the largest eigenvalue of the moment model. The inner time step size $\delta t_0$ or $\delta t$ depends on the stiffness of the equation and is normally determined by $\delta t_0 = \frac{\nu_{max}}{\tau}$, for maximum collision frequency $\nu_{max}$.

Note that the speedup can be increased by a coarse macroscopic time step $\Delta t$ which is possible with a coarser spatial discretization, i.e. larger $\Delta x$, due to the CFL condition for the slow modes. This makes high-order spatial discretizations necessary to keep the desired spatial accuracy. In this work we thus use up to third-order spatial discretizations.
\section{Spectral Analysis}
\label{sec:LSA}
To overcome the stiffness of the kinetic equation \eqref{e:BTE} caused by the fast relaxing higher-order moments, we first need to characterize the spectral properties of the semi-discrete system \eqref{e:semidiscrete}, compare \cite{Yong1999}, so that we can match the spectrum of the system with the stability domain of the method, e.g., equation \eqref{e:stability_conditionPFE_lambda} and figure \ref{fig:stabilitydomain}. A detailed stability analysis for Discrete Velocity Models (DVM) was carried out in \cite{Melis2016} and similarly in \cite{Lafitte2016,Melis2019}. We will show that a linearized version of the hyperbolic moment model has the same spectrum as the DVM model studied in \cite{Melis2016} and then use the results to obtain the parameters for the PI schemes for a large variety of setups. 

\subsection{Linearized Hermite Spectral Method}
In the moment model, the non-linearity originates from the shifted expansion of the distribution function, see \eqref{e:vars_expansion} and \eqref{e:vars_transformed_velocity}, around a \emph{local} equilibrium Maxwellian. The expansion \eqref{e:vars_expansion} thus depends on the local density $\rho$, velocity $\Vel$ and temperature $\theta$. A linearized model can be derived by using a global Maxwellian for the expansion with vanishing velocity shift \eqref{e:vars_transformed_velocity}. This method is very close to the non-linear moment model because it uses the same Hermite basis and test functions, but it leads to a much simpler, linear model \cite{Fan2019}.

The expansion in 1D then reads
\begin{equation} \label{e:expansionHSM}
    f(t,x,c)=\sum_{\alpha = 0}^M f_{\alpha}(t,x) \mathcal{H}_{\alpha}(c),
\end{equation}
where the weighted Hermite basis functions $\mathcal{H}_{\alpha}$ are defined as
\begin{equation}\label{e:grad-basisfunctionHSM}
    \mathcal{H}_{\alpha}(c) = \frac{1}{\sqrt{2 \pi}} \exp \left( -\frac{c^2}{2} \right) He_{\alpha}(c) \cdot \frac{1}{\sqrt{2^{\alpha} {\alpha}!}}
\end{equation}
and $He_{\alpha}$ is the standard Hermite polynomial of degree ${\alpha}$. The last factor is chosen for normalization of the basis functions.

Similar to the non-linear model, the following constraints hold for the linear model
\begin{equation} \label{e:constraints_Grad_HSM}
    f_{0} = \rho, ~~ f_{1} = \rho u, ~~ f_{2} = \frac{1}{\sqrt{2}} \left( \rho \theta + \rho u^2 - \rho \right).
\end{equation}

The linear moment model is then also derived by projection of equation \eqref{e:BTE} onto Hermite polynomials and can be written as
\begin{equation}\label{e:grad-system_HSM}
     \frac{\partial \vect{f}}{\partial t} + \Vect{A} \frac{\partial \vect{f}}{\partial x} = -\frac{\nu}{\tau}\vect{S}\left( \vect{f} \right),
\end{equation}
using a $\vect{f}= \left(f_0,\ldots,f_M\right)^T$ and constant system matrix $\Vect{A} \in \mathbb{R}^{(M+1) \times (M+1)}$ given by
\begin{equation}
\label{e:grad_A_HSM}
\Vect{A} = \left(
  \begin{array}{ccccc}
     & 1 &   &   &   \\
     1 &  & \sqrt{2}  &   &   \\
     & \sqrt{2} &   &  \ddots &   \\
     &  & \ddots  &   & \sqrt{M}  \\
     &  &   & \sqrt{M}  &
  \end{array}
\right).
\end{equation}

The right-hand side vector $\vect{S}\left( \Var_M \right) \in \mathbb{R}^{M+1}$ for the 1D BGK model \eqref{e:BGK} is given by
\begin{equation}
  \vect{S}_{\alpha} = \int_{\mathbb{R}} \left( f(t,x,c) - f_M(t,x,c) \right) \psi_{\alpha}(c) \,dc, \quad \textrm{ for } \psi_{\alpha}(c) = He_{\alpha}(c) \cdot \frac{1}{\sqrt{2^{\alpha} {\alpha}!}},
\end{equation}
using the ansatz from \eqref{e:expansionHSM} and its form can be computed analytically beforehand. We omit the details of the derivation here for conciseness. We note that the right-hand side terms still have the same relaxation behavior as the non-linear model.

The propagation speeds of the hyperbolic transport part of the system \eqref{e:grad-system_HSM} are the roots of the Hermite polynomial
\begin{equation}\label{e:HSM_EV}
    \lambda_{i} = c_{i}, \textrm{ for } \textrm{He}_{M+1}(c_{i}) = 0,~ i=1, \ldots, M+1,
\end{equation}
which is a linearized version of the full non-linear model \eqref{e:HME_EV}. On the other hand, the following theorem states that the HSM model can also be seen as a linear transformation of a DVM method with non-uniformly placed discrete velocities according to the roots of the Hermite polynomial.

\begin{theorem}\label{th:HSM_CVM}
  The Hermite Spectral Method (HSM) \eqref{e:grad-system_HSM} for the BGK collision operator \eqref{e:BGK} using $M+1$ equations has the same spectrum as the Discrete Velocity Model (DVM) used in \cite{Melis2016}, Theorem 3.1, when the discrete velocities are the roots of the Hermite polynomial of degree $M+1$.
\end{theorem}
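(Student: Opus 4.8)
The plan is to construct a single orthogonal change of variables $Q$ that simultaneously conjugates the transport matrix $\Vect{A}$ of the HSM system \eqref{e:grad-system_HSM} to the diagonal velocity matrix of the DVM and the linearized BGK collision matrix of the HSM to the discrete BGK collision matrix of the DVM. Because a similarity transformation preserves eigenvalues, it suffices to check that transport and collision transform consistently under the same $Q$; the two full semi-discrete operators are then similar and share the spectrum, which is the assertion.

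First I would handle the transport part. The matrix $\Vect{A}$ in \eqref{e:grad_A_HSM} is exactly the symmetric tridiagonal Jacobi matrix of the orthonormal (probabilists') Hermite polynomials, whose three-term recurrence produces the off-diagonal entries $\sqrt{1},\sqrt{2},\ldots,\sqrt{M}$. By the classical spectral theory of Jacobi matrices and Gauss--Hermite quadrature, $\Vect{A}$ is orthogonally diagonalizable, $Q^{T}\Vect{A}\,Q=\mathrm{diag}(c_{1},\ldots,c_{M+1})$, where the $c_{i}$ are the roots of $He_{M+1}$ and the columns of $Q$ are the normalized eigenvectors whose entries are the values $He_{\alpha}(c_{i})$ scaled by the Christoffel (quadrature) weights. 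This is precisely the transformation that identifies the Hermite moment vector $\vect{f}$ with the weighted point values of the distribution at the velocities $c_{i}$, and it immediately shows that the transport spectrum of the HSM coincides with the velocity set $\{c_{i}\}$ of the DVM, in agreement with \eqref{e:HSM_EV}.

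It then remains to push the collision term through $Q$. The linearized BGK operator in \eqref{e:grad-system_HSM} relaxes every moment except the three collision invariants $f_{0},f_{1},f_{2}$ (mass, momentum, energy), so in the Hermite basis its matrix is the diagonal projection complement $I-P_{\mathrm{eq}}$ with $P_{\mathrm{eq}}=\mathrm{diag}(1,1,1,0,\ldots,0)$. I would verify that $Q^{T}P_{\mathrm{eq}}\,Q=\tilde{P}_{\mathrm{eq}}$, where $\tilde{P}_{\mathrm{eq}}$ is the DVM discrete-equilibrium projection onto $\mathrm{span}\{1,c,c^{2}\}$ evaluated at the nodes in the quadrature inner product. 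The columns of $Q$ indexed by $f_{0},f_{1},f_{2}$ are the nodal values of the first three orthonormal Hermite polynomials, and since Gauss--Hermite quadrature with $M+1$ nodes is exact for polynomials of degree up to $2M+1$ (hence at least degree $4$, covering all products of collision invariants), these nodal vectors remain orthonormal in the discrete inner product and span exactly the discrete collision invariants. Consequently $P_{\mathrm{eq}}$ and $\tilde{P}_{\mathrm{eq}}$ are the orthogonal projections onto corresponding subspaces and are intertwined by $Q$.

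Combining the two pieces, for every Fourier mode with symbol $\mathrm{i}k$ one obtains $Q^{T}\bigl(-\mathrm{i}k\,\Vect{A}-\tfrac{\nu}{\tau}(I-P_{\mathrm{eq}})\bigr)Q=-\mathrm{i}k\,\mathrm{diag}(c_{i})-\tfrac{\nu}{\tau}(I-\tilde{P}_{\mathrm{eq}})$, so the linearized semi-discrete HSM and DVM operators are orthogonally similar and therefore isospectral. I expect the main obstacle to be the collision step, that is, proving that the DVM discrete-equilibrium projection is the exact image under $Q$ of the Hermite moment projection: this is where the exactness of Gauss--Hermite quadrature on the low-degree invariants is indispensable, since it is what transfers the continuous $L^{2}$ orthogonality of the collision invariants to the discrete nodal inner product and thereby forces the two equilibrium subspaces to match.
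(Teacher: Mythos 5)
Your proposal is correct and follows essentially the same route as the paper: both arguments build the change of basis from point values of the Hermite polynomials at the Gauss--Hermite nodes (the paper's matrix $\Vect{B}$ with $\Vect{B}_{i,j}=He_i(v_j)$ is your $Q$ up to the Christoffel-weight normalization that makes it orthogonal), use exactness of the quadrature to identify the two sets of coefficients and the two equilibrium projections, and conclude by similarity of the transport-plus-collision operators. The only differences are presentational: you diagonalize $\Vect{A}$ explicitly via Jacobi-matrix spectral theory and linearize the BGK term before conjugating, whereas the paper keeps the Maxwellian projection $\vect{P}(f_M)$ intact and shows it transforms consistently under $\Vect{B}$.
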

\begin{proof}
  We first relate the coefficients of the HSM and DVM models, before using a similarity transformation of the system \eqref{e:grad-system_HSM}.

  The HSM uses ansatz \eqref{e:expansionHSM} and then employs the projection of a distribution function $g$ to the $j$-th Hermite polynomial $He_{j}$
  \begin{equation}\label{e:HSM_projection}
    P_j(g) = \int_{\mathbb{R}} f(c) He_{j}(c) \, dc
  \end{equation}

  Applying the projection to the expanded distribution function \eqref{e:expansionHSM} and using an (exact) Gauss-Hermite quadrature rule with quadrature weights $\omega_k$ and quadrature points $c_k$ as roots of $He_{M+1}$, for $k=0,\ldots, M$, leads to
  \begin{align}
    P_{j}(f) &= \int_{\mathbb{R}} f(t,x,c) He_{j}(c) \, dc \\
     &= \sum_{k=0}^M \omega_k f(t,x,c_k) He_{j}(c_k)\\
     &= \sum_{k=0}^M \omega_k \sum_{\alpha = 0}^M f_{\alpha}(t,x) \mathcal{H}_{\alpha}(c_k) He_{j}(c_k)\\
     &= f_{j}(t,x),
  \end{align}
  due to orthonormality and exactness of the Gauss-Hermite quadrature rule.

  Similarly, a DVM method using the same $M+1$ discrete velocities $v_{\alpha}=c_{\alpha}$ can be written using an expansion in Dirac functions $\delta\left(c-v_{\alpha}\right)$ as
  \begin{equation} \label{e:expansionDVM}
    f(t,x,c)=\sum_{\alpha = 0}^M \widetilde{f_{\alpha}}(t,x) \delta\left(c-v_{\alpha}\right),
  \end{equation}
  with the corresponding projection $\widetilde{P_j}$ that leads to
  \begin{align}
    \widetilde{P_j}(f) &= \int_{\mathbb{R}} f(t,x,c) \delta\left(c-v_j\right) \, dc \\
     &= \int_{\mathbb{R}} \sum_{\alpha = 0}^M \widetilde{f_{\alpha}}(t,x) \delta\left(c-v_{\alpha}\right) \delta\left(c-v_j\right) \, dc \\
     &= \widetilde{f_{j}}(t,x).
  \end{align}

  A relation between both sets of coefficients $f_{\alpha}$ and $\widetilde{f_{\alpha}}$ can be derived by matching the moments of the respective expansions for the distribution function $f(t,x,c)$. The moments are computed with Hermite test functions
  \begin{align}
    \int_{\mathbb{R}} f(t,x,c) He_{j}(c) \, dc &= \int_{\mathbb{R}} f(t,x,c) He_{j}(c) \, dc \\
    \int_{\mathbb{R}} \sum_{\alpha = 0}^M f_{\alpha}(t,x) \mathcal{H}_{\alpha}(c) He_{j}(c) \, dc &= \int_{\mathbb{R}} \sum_{\alpha = 0}^M \widetilde{f_{\alpha}}(t,x) \delta\left(c-v_{\alpha}\right) He_{j}(c) \, dc \\
     f_{j}(t,x) &= \sum_{\alpha = 0}^M \widetilde{f_{\alpha}}(t,x) He_{j}(v_{\alpha})  \\
     \Rightarrow \vect{f} &= \Vect{B} \cdot \widetilde{\vect{f}},
  \end{align}
  where the entries of the transformation matrix $B \in \mathbb{R}^{(M+1)\times(M+1))}$ are point evaluations of the Hermite functions at the $v_k$, i.e. $\vect{B}_{i,j} = He_{i}\left(v_j\right)$.

  Using the relation of the different sets of coefficients, we continue from the HSM system \eqref{e:grad-system_HSM}, with BGK right-hand side from \eqref{e:BGK}

  \begin{equation}\label{e:grad-system_HSM_proj}
     \frac{\partial \vect{f}}{\partial t} + \Vect{A} \frac{\partial \vect{f}}{\partial x} = -\frac{\nu}{\tau}\left( \vect{P}\left(f_M\right) - \vect{f} \right).
  \end{equation}
  The entries of the projection $\vect{P}\left( f_M \right)$ of the Maxwellian on the right-hand side can be computed using the same quadrature rule as before to find the relation to the DVM model's projection $\widetilde{\vect{P}}\left( f_M \right)$
  \begin{align}
    P_j\left( f_M \right) &= \int_{\mathbb{R}} f_M(c) He_{j}(c) \, dc \\
     &= \sum_{k=0}^M \omega_k f_M(c_k) He_{j}(c_k) \\
    \Rightarrow \vect{P}\left( f_M \right) &= \Vect{B} \widetilde{\vect{P}}\left( f_M \right).
  \end{align}

  By multiplication of \eqref{e:grad-system_HSM_proj} with the inverse of the constant transformation matrix $\Vect{B}^{-1}$ we get
  \begin{align}
    \frac{\partial \Vect{B}^{-1} \vect{f}}{\partial t} + \Vect{B}^{-1} \Vect{A} \frac{\partial  \vect{f}}{\partial x} &= -\frac{\nu}{\tau}\left( \Vect{B}^{-1} \vect{P}\left(f_M\right) - \Vect{B}^{-1} \vect{f} \right) \\
    \frac{\partial \Vect{B}^{-1} \Vect{B} \widetilde{\vect{f}}}{\partial t} + \Vect{B}^{-1} \Vect{A} \Vect{B} \frac{\partial  \widetilde{\vect{f}}}{\partial x} &= -\frac{\nu}{\tau}\left( \Vect{B}^{-1} \Vect{B} \widetilde{\vect{P}}\left( f_M \right) - \Vect{B}^{-1} \Vect{B} \widetilde{\vect{f}} \right) \\
    \Rightarrow \frac{\partial \widetilde{\vect{f}}}{\partial t} + \widetilde{\Vect{A}} \frac{\partial \widetilde{\vect{f}}}{\partial x} &= -\frac{\nu}{\tau}\left(  \widetilde{\vect{P}}\left( f_M \right) - \widetilde{\vect{f}} \right),
  \end{align}
  resulting in the DVM system with system matrix $\widetilde{\Vect{A}} = \Vect{B}^{-1} \Vect{A} \Vect{B}$, which is a similarity transformation of the HSM system matrix $\Vect{A}$.

  The DVM system and the HSM system for the BGK model are thus similar with the same spectral properties.
\end{proof}

This allows to use the results for the DVM schemes presented in \cite{Melis2019} in this linearized version of the full non-linear moment model. As the investigation of the full Boltzmann collision operator is more involved, we restrict ourselves to the BGK operator here. However, a similar result is expected to hold for the Boltzmann operator and the linearized moment model.

In \cite{Melis2019} the spectrum of the BGK operator is characterized by a slow eigenvalue cluster and one or more fast eigenvalue clusters. After spatial discretization, the spectrum of the respective semi-discrete system can be evaluated. The main parameters to characterize the spectrum are the collision frequency $\nu$ and the relaxation time $\tau$ occurring on the right-hand side of the collision term \eqref{e:grad-system_HSM}. Following \cite{Melis2019} and theorem \ref{th:HSM_CVM}, the spectrum $\mathcal{S}$ of system \eqref{e:grad-system_HSM} after spatial discretization can be formally characterized by
\begin{equation}
\label{e:HSM_spectrum}
  \mathcal{S} \subset \mathcal{D}\left( -\frac{\nu}{\tau} , r_{D_x} \right) \cup \{ \lambda^{(1)} \} ,
\end{equation}
where the radius $r_{D_x}$ depends on the velocity space and spatial discretization and the dominant eigenvalues $\lambda^{(1)}$ correspond to the slow eigenvalues of the macroscopic variables. We clearly see a scale separation and want to exemplify the results for the HSM model numerically in the following.

For the numerical computation of the spectrum, we consider equation \eqref{e:grad-system_HSM} and write its semi-discrete version after discretization in space similar to the non-linear model \eqref{e:semidiscrete} as
\begin{equation}\label{e:semi_discrete_LSA}
     \frac{\partial \vect{f}}{\partial t} = \vect{D_{x}}(\vect{f}, \tau, \nu),
\end{equation}
where the term $\vect{D_{x}}(\vect{f}, \tau, \nu)$ denotes the spatial discretization of the transport and collision terms on a spatial grid. In this linear stability analysis, we then linearize the right hand side in $\vect{f}$ around equilibrium $\vect{f_0}$
\begin{equation}\label{e:LSA_linearization}
     \vect{D_{x}}(\vect{f}, \tau, \nu) \approx \vect{D_{x}}(\vect{f_0}, \tau, \nu) + \frac{\partial \vect{D_{x}}}{\partial \vect{f}}\left( \vect{f} - \vect{f_0}\right),
\end{equation}
where the Jacobian of the spatial discretization with respect to the solution is the matrix $\frac{\partial \vect{D_{x}}}{\partial \vect{f}}$, that determines the linear stability of the semi-discrete system \eqref{e:semi_discrete_LSA}. Matching the spectrum of this matrix with the stability domain of the time integration scheme later is the crucial step to achieve a stable time stepping scheme.

\begin{remark}
    We assume that the linearization error for the HSM model is small because of two reasons. Firstly, the HSM system matrix \eqref{e:grad_A_HSM} is constant. Secondly, the BGK collision operator $\vect{S}$ models relaxation of higher order moments towards the equilibrium. Close to equilibrium, this leads to a diagonal matrix, see also \eqref{e:BGK_term}. In addition, the spatial discretization is a simple, linear combination of neighboring values on the grid, at least for the first order spatial schemes. The linearization in \eqref{e:LSA_linearization} is therefore reasonable. This is in agreement with the analysis in \cite{Melis2016} for DVM models.

    Transferring the results of the linear stability analysis to the non-linear model is justified, because we are only interested in solutions of the non-linear model close to equilibrium. In equilibrium, the distribution function degenerates to a Maxwellian and all non-equilibrium variables vanish. For larger deviations from non-equilibrium, the linear stability analysis might no longer be sufficient. The interested reader is referred to the equilibrium stability analysis for a class of non-linear moment models in \cite{zhao2017}.

    The parameters for the PI method are chosen based on the linear stability analysis. The numerical results in the next section of this paper indicate that this procedure works for the test cases presented. A more detailed study of the spectral properties of the non-linear system is beyond the scope of this paper and might be considered as future work.
\end{remark}

The computations of the Jacobian matrix $\frac{\partial \vect{D_{x}}}{\partial \vect{f}}$ and the corresponding eigenvalues are performed numerically using finite differences in the software \cite{Koellermeier2020b}. As the eigenvalues depend largely on $\nu$ and $\tau$, we will distinguish three different examples for $\nu$ and $\tau$ to investigate the spectrum and allow for a proper choice of the PI parameters thereafter.
All examples are performed for a spatial discretization of the domain $[-2,2]$ using a constant $\Delta x = 0.01$ using the HSM with $M=4$. This leads to a semi-discrete system with $400\cdot 5 = 2000$ variables. However, the results are qualitatively the same for different spatial discretizations and models.

\subsubsection{BGK with constant relaxation time}
\label{sec:LSA_const}
Figure \ref{fig:SpectrumNu1} shows the numerical eigenvalue spectrum of the HSM method with $M=4$ for constant collision frequency $\nu=1$ and varying relaxation times. Starting with $\tau=1$ in figure \ref{fig:HSM5FENu1Kn1}, which corresponds to the kinetic regime, we do only see one cluster of eigenvalues as the microscopic and macroscopic scale are of the same order. The same holds for $\tau = 10^{-2}$ (not shown), where the scales are still about the same. Further in the transitional regime for $\tau = 10^{-3}$ there is a separation for the first time. The microscopic modes relax much faster than the macroscopic modes, which are independent of the relaxation time. In the hydrodynamic regime, see figure \ref{fig:HSM5FENu1Kn0p0001}, there is a larger separation and a significant speedup can be expected from using a PI method in this cases. We note that there is only one fast cluster in all cases, which will be different in the next test case.

\begin{figure}[htb!]
    \centering
    \begin{subfigures}
    \subfloat[Eigenvalue spectrum for $\nu=1$, $\tau = 1$. \label{fig:HSM5FENu1Kn1}
    ]{\includegraphics[width=0.85\linewidth]{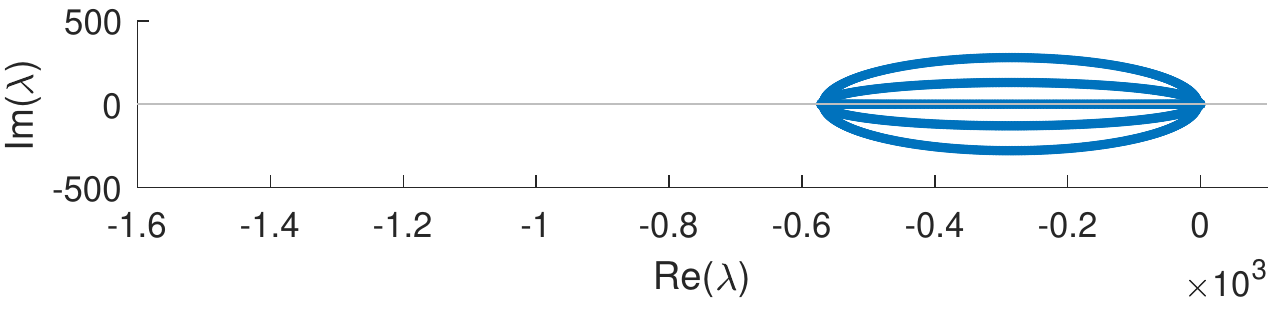}}\\
    \subfloat[Eigenvalue spectrum for $\nu=1$, $\tau = 10^{-3}$. \label{fig:HSM5FENu1Kn0p001}
    ]{\includegraphics[width=0.85\linewidth]{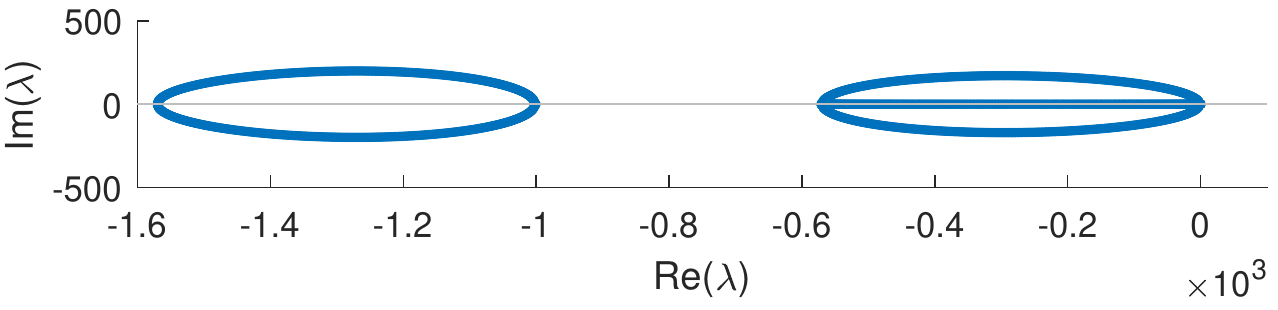}}\\
    \subfloat[Eigenvalue spectrum for $\nu=1$, $\tau = 10^{-4}$. \label{fig:HSM5FENu1Kn0p0001}
    ]{\includegraphics[width=0.85\linewidth]{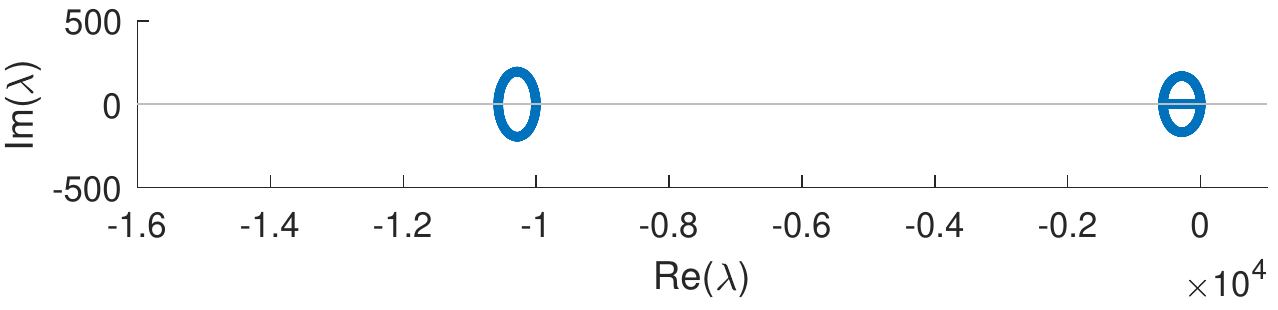}}\\
    \end{subfigures}
    \caption{Increasing spectral gap in eigenvalue spectra of HSM4 for constant collision frequency $\nu=1$ and varying $\tau$ is ideally suited for the application of projective integration.}
    \label{fig:SpectrumNu1}
\end{figure}

\subsubsection{BGK with piecewise constant relaxation time}
\label{sec:LSA_piecewise}
For a piecewise constant collision frequency time $\nu \in \{0.1,1\}$, e.g., by changing the collision frequency the spatial domain, the results are shown in figure \ref{fig:SpectrumNuPiecewise10}. Figure \ref{fig:HSM5FENuPiecewise10Kn0p001} shows no additional eigenvalue cluster for $\tau = 10^{-3}$ because the cluster corresponding to the low-collisional part coincides with the macroscopic slow cluster. However, there is an intermediate cluster for both $\tau = 10^{-4}$ in figure \ref{fig:HSM5FENuPiecewise10Kn0p0001} which will be much more pronounced for larger $\tau = 10^{-6}$ (not shown). 
In those cases, a standard PFE method would not be stable and an additional projective integrator needs to be used. This can efficiently be realized by the TPFE method with an intermediate integrator tailored to the intermediate cluster.

\begin{figure}[htb!]
    \centering
    \begin{subfigures}
    \subfloat[Eigenvalue spectrum for $\nu \in \{0.1,1\}$, $\tau = 10^{-3}$. \label{fig:HSM5FENuPiecewise10Kn0p001}
    ]{\includegraphics[width=0.85\linewidth]{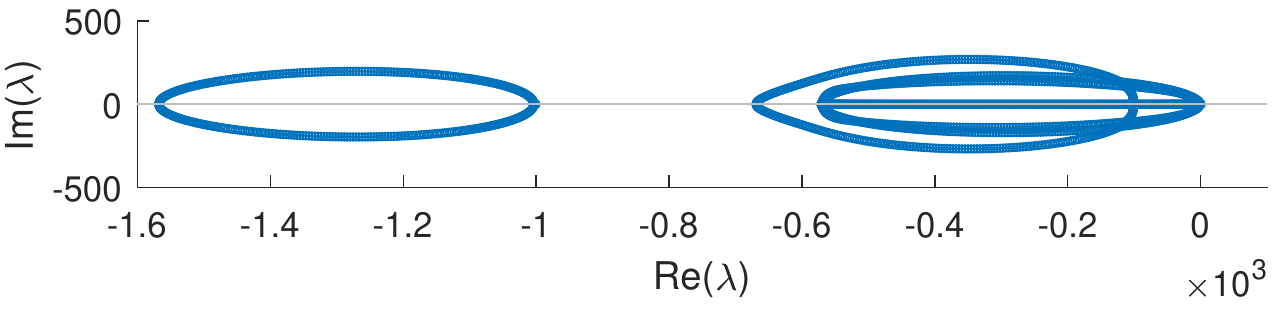}}\\
    \subfloat[Eigenvalue spectrum for $\nu \in \{0.1,1\}$, $\tau = 10^{-4}$. \label{fig:HSM5FENuPiecewise10Kn0p0001}
    ]{\includegraphics[width=0.85\linewidth]{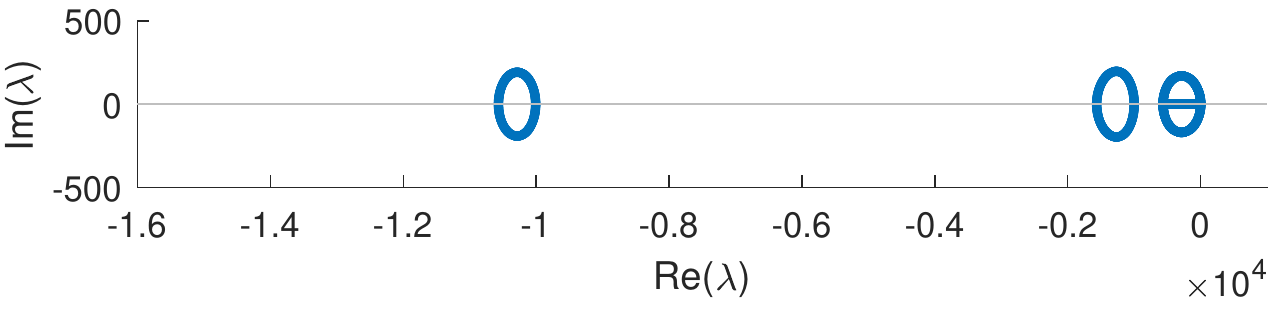}}\\
    \end{subfigures}
    \caption{Additional intermediate cluster in eigenvalue spectra of HSM4 for piecewise constant collision frequency $\nu \in \{0.1,1\}$ and varying $\tau$ is ideally suited for the application of a two-level telescopic projective integration.}
    \label{fig:SpectrumNuPiecewise10}
\end{figure}

\subsubsection{BGK with space-dependent relaxation time}
\label{sec:LSA_cont}
For a space-dependent collision frequency $\nu = \rho(x)$ with $\rho \in [1,7]$ according to the shock tube test case later, we get the extended spectrum shown in figure \ref{fig:SpectrumNuRho71}. There is still one slow macroscopic cluster. However, the microscopic cluster is spread out along the negative axis in all cases. For $\tau = 10^{-3}$ in figure \ref{fig:HSM5FENuRho71Kn0p001}, the values are not exactly on the negative real axis as the relaxation is not enough to damp the imaginary parts. In all cases, neither a standard PFE method nor a TPFE method with the parameter settings derived from the previous test case are stable. Instead a TPFE with a connected stability region needs to be used. We will outline the construction of this method according to \cite{Melis2019} in the following section.

\begin{figure}[htb!]
    \centering
    \begin{subfigures}
    \subfloat[Eigenvalue spectrum for space-dependent $\nu$ and $\tau = 10^{-3}$. \label{fig:HSM5FENuRho71Kn0p001}
    ]{\includegraphics[width=0.85\linewidth]{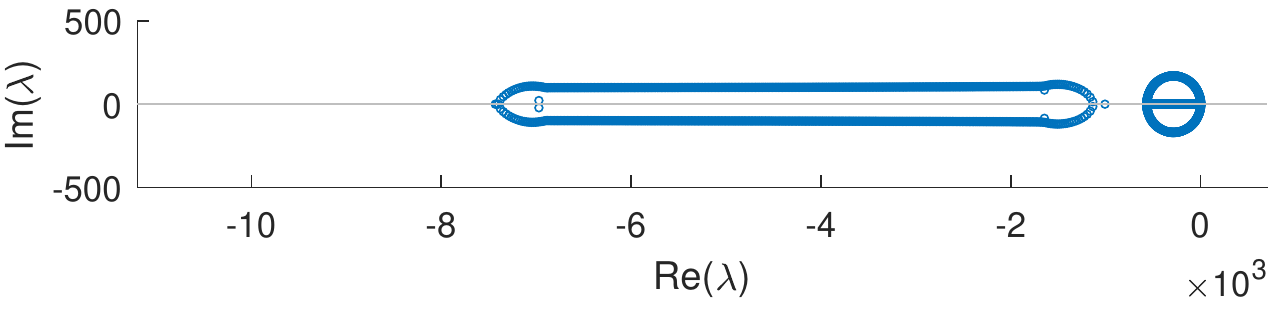}}\\
    \subfloat[Eigenvalue spectrum for space-dependent $\nu$ and $\tau = 10^{-4}$. \label{fig:HSM5FENuRho71Kn0p0001}
    ]{\includegraphics[width=0.85\linewidth]{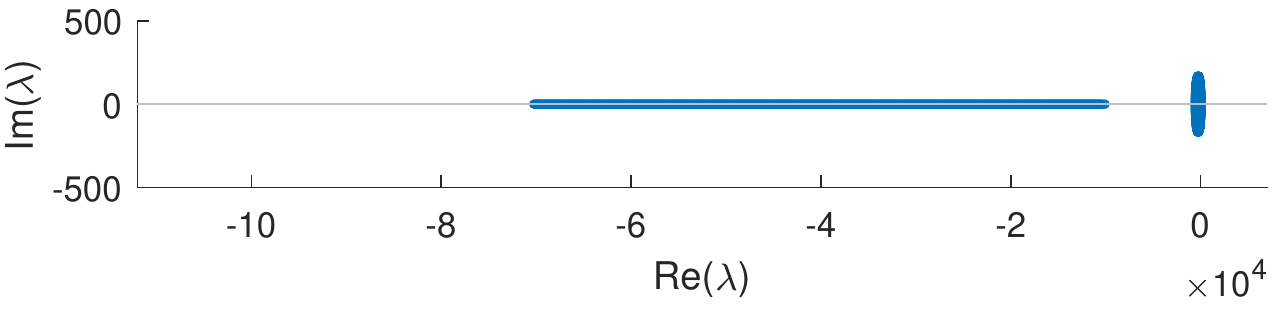}}\\
    \end{subfigures}
    \caption{Extended fast eigenvalue spectra of HSM4 for space-dependent collision frequency $\nu = \rho(x) \in [1,7]$ and varying $\tau$ requires a connected stability region of a TPI method.}
    \label{fig:SpectrumNuRho71}
\end{figure}

\subsection{Non-linear Hyperbolic Moment model and Boltzmann collision operator}
\label{sec:LSA_BTE}
The results of the linear stability analysis obviously depend on the linearization. However, the right hand side relaxation terms are only linear in the non-equilibrium variables for the BGK equation. For the Boltzmann collision operator, we refer to the results in \cite{Melis2019}, leading to a extended spectrum. In the case where the collision frequency $\nu$ depends on the density $\rho$, we assume a maximum principle for the density $\rho$. This means that the range of $\rho$ can be known a-priori and the properties of the spectrum can be determined.

We can thus apply the observations of the spectral properties from the previous test cases also for the non-linear moment model and the Boltzmann collision operator and use this to determine the parameters of the PI methods in the following section.

\subsection{PI parameter choice}
\label{sec:parameters}
Now we choose all involved parameters so that a stable time stepping method for the semi-discrete system \eqref{e:semidiscrete} is obtained. This requires matching of the spectrum of the semi-discrete system with the stability domain of the method, see section \ref{sec:Num}. It is essential for this to know the position of the fast eigenvalue clusters, which has been investigated and clarified by the stability analysis. For constant and piecewise constant collision frequency $\nu$, the fast clusters are always located at the position corresponding to $-\frac{\nu}{\tau}$ on the real axis. For the space-dependent $\nu=\rho(x)$, the spectrum of fast modes extends over the domain $\left[-\frac{\rho_{max}}{\tau},-\frac{\rho_{min}}{\tau}\right]$, assuming that only intermediate values in the interval $\left[\rho_{min},\rho_{max}\right]$ are obtained by the pressure.

Based on the stability analysis in the previous sections, the appropriate numerical schemes and their parameters can be chosen while taking into account the respective stability properties. We follow the suggestions in \cite{Melis2019,Melis2016} where a similar studies were performed for DVM methods. In general, we can distinguish four cases:

\begin{itemize}
  \item[1.] No clear scale separation, compare figures \ref{fig:HSM5FENu1Kn1}, \ref{fig:HSM5FENu1Kn0p001}, \ref{fig:HSM5FENuPiecewise10Kn0p001}. All modes can be covered by the standard macroscopic time step using a CFL-type time step size. The use of PI is not necessary. A standard FE scheme will be used.
  \item[2.] Scale separation with one cluster of fast modes that requires PI, compare figures \ref{fig:HSM5FENu1Kn0p0001}. In that case, we choose a standard PI method such as PFE or PRK3 using $\delta t$ according to the position of the fast eigenvalue cluster. In the aforementioned cases, this leads to $\delta t = \tau$. According to \cite{Melis2017}, a small number of inner iterations is sufficient. In our numerical tests, we use $K=1$ for cases with only one fast cluster. This leads to a fast, but stable integration scheme.
  \item[3.] Scale separation with more than one cluster of fast modes that requires TPI, compare \ref{fig:HSM5FENuPiecewise10Kn0p0001}. The appropriate parameters of the TPI method can be derived in the following way:
      \begin{itemize}
        \item[(a)] The time step sizes $\delta t_i$ are determined depending on the positions of the respective fast clusters according to $\delta t_i = \frac{\nu_i}{\tau}$.
        \item[(b)] According to \cite{Melis2019}, choosing $K=1$ is sufficient for the application cases.
        \item[(c)] The respective extrapolation factor can be computed directly according to equation \eqref{e:extrapolation_factor}.
      \end{itemize}
  \item[4.] Continuous spectrum extending outside of the stability region of the macroscopic time step requiring an A-stable integrator, see figures \ref{fig:HSM5FENuRho71Kn0p001}, \ref{fig:HSM5FENuRho71Kn0p0001}:
      \begin{itemize}
        \item[4.1] A small range of the spectrum, i.e. $\frac{\Delta t}{\delta t_0}< 27$, for CFL-type time step size $\Delta t$ and microscopic time step size $\delta t_0$ according to the fastest eigenvalues in the system obtained from the spectrum. In this case, one level of PI is enough and only the number of inner time steps needs to be increased. The maximum extrapolation factor $N+K+1=\frac{\Delta t}{\delta t_0}$ for which a connected stability region is obtained is given in table \ref{tab:M}.

            As an example, a range of $\frac{\Delta t}{\delta t_0} \approx 10$ leads to $K=3$ inner iterations to obtain a connected spectrum.
        \item[4.2] The spectrum extends over a wider range $\frac{\Delta t}{\delta t_0}$, so that at least one additional layer of PI is necessary leading to a an actual TPI scheme. The parameters are then chosen according to
            \begin{itemize}
                \item[(a)] The innermost time step size $\delta t_0$ is determined based on the positions of the fastest mode.
                \item[(b)] A number of inner time steps $K_i$ on each level is fixed. Here we always use a constant $K_i=K$. The outer time step size $\Delta t$ is fixed according to a CFL-type condition.
                \item[(c)] The maximum factor $N$ used for the extrapolation is found in table \ref{tab:M}. This maximum choice of $N$ ensures that the stability regions are connected.
                \item[(d)] The minimum number of levels is computed using
                    \begin{equation}
                        L = \frac{\log(\Delta t) + \log\left(\frac{1}{\delta t_0}\right)}{\log(N+K+1)}
                    \end{equation}
                \item[(e)] The intermediate time step sizes can be computed according to \eqref{e:extrapolation_factor}, e.g., as
                    \begin{equation}
                        \delta t_1 = (N+K+1)\cdot \delta t_0
                    \end{equation}
                    The outer factors $N$ might need to be slightly adapted to be consistent with the outermost time step size $\Delta t$, for more details see \cite{Melis2019}.
            \end{itemize}
      \end{itemize}
\end{itemize}

\begin{table}[H]
    \centering
    \caption{Maximum extrapolation factors $N+K+1$ for connected stability region depending on $K$ according to \cite{Melis2019}.}
    \label{tab:M}
    \begin{tabular}{c|c|c|c|c|c|c|c}
      $K$ & 1 & 2 & 3 & 4 & 5 & 6 & 7 \\ \hline
      $N+K+1$ & 4 & 6 & 10.66 & 13.32 & 18.21 & 21.24 & 26.21 \\ \hline
      $N$ & 2 & 3 & 6.66 & 8.32 & 12.21 & 14.24 & 18.21
    \end{tabular}
\end{table}
\section{Numerical experiments}
\label{sec:NumEx}
All test cases are computed either with the non-linear QBME \eqref{e:vars_system1D} and \eqref{e:vars_system} based on a moment method expansion of the distribution function, or with the linearized HSM \eqref{e:grad-system_HSM} derived with the help of a linearization around a global Maxwellian. For implementation details used in all examples of this section we refer to the implementation \cite{Koellermeier2020b}.

\subsection{Shock tube problem}
For the first application test we consider a 1D shock tube, a standard benchmark problem in rarefied gases, see \cite{Au2001,Cai2013,Koellermeier2017}. The shock tube features a strong propagating shock wave. Close to the shock the solution will be in non-equilibrium if the relaxation time $\tau$ is large. However, for small relaxation time, the solution will quickly relax to the equilibrium Maxwellian and in the limit it can be derived easily by the well known Euler equations \eqref{e:cons_mass}-\eqref{e:cons_energy}. In this regime the kinetic equation becomes stiff and it is difficult to solve. We are thus interested in a speedup of moment models for simulations close to equilibrium.

At $t=0$, the gas is in exact equilibrium, with the density, velocity, and temperature given by
\begin{equation}
    \left(\rho, \vel, \theta \right) = \left\{
  \begin{array}{cl}
    \left(7,0,1 \right) & \textrm{if } x < 0 \\
    \left(1,0,1 \right) & \textrm{if } x > 0 \\
  \end{array}
\right.,
    \label{e:shock_tube_problem}
\end{equation}
modeling a jump in density at the discontinuity at $x=0$.

The computational domain is $[-2,2]$. The simulations run until $t_{\textrm{\tiny{END}}} = 0.3$ and the constant macroscopic time step is $\Delta t = 3.85 \cdot 10^{-4}$ corresponding to a CFL number of $0.5$ on a spatial grid discretized with $1000$ cells. Note that we use less cells as in \cite{Koellermeier2017} due to the higher-order spatial discretization. Both moment models HSM and QBME use $M=9$.

\subsubsection{Constant collision frequency \texorpdfstring{$\nu=1$}{}}
We first consider the case of a constant collision frequency $\nu=1$, in which the spectrum has a clear spectral gap for small relaxation time, as analyzed in \ref{sec:LSA_const}. Before starting the simulations, the correct methods and parameters need to be chosen. We distinguish five different cases by value of the Knudsen number:
\begin{itemize}
  \item[1.] $\tau = 10^{-1}$: kinetic regime. According to figure \ref{fig:SpectrumNu1}, there is no scale separation and the fast modes can accurately be captured by the CFL-type macroscopic time step. We thus employ the standard FE scheme using $\Delta t = 3.85 \cdot 10^{-4}$ corresponding to a CFL number of $0.5$.
  \item[2.] $\tau = 10^{-2}$: transitional regime. According to figure \ref{fig:SpectrumNu1}, there is still no scale separation and we can use the same settings as for case 1.
  \item[3.] $\tau = 10^{-3}$: transitional regime. According to figure \ref{fig:SpectrumNu1}, the scales have separated and we see a very small spectral gap. However, the fast scale is of the order the time step size $\Delta t = 3.85 \cdot 10^{-4}$. This means that we are not yet in a stiff situation, where the fast modes require a special treatment. We use the FE scheme with the above settings.
  \item[4.] $\tau = 10^{-4}$: transitional regime. Figure \ref{fig:SpectrumNu1} shows a clear scale separation and the a standard FE method would be unstable. We thus employ the PFE method. For the stable integration of the fast cluster, we use inner step size $\delta t = \tau = 10^{-4}$ and $K=1$.
  \item[5.] $\tau \leq 10^{-5}$: hydrodynamic regime. The clear scale separation in figure \ref{fig:SpectrumNu1} grows and we take this into account by choosing the proper PFE with inner step size $\delta t = \tau$ and $K=1$.
\end{itemize}
We note that the choices for the methods clearly follow the stability analysis of the previous section and do not require any iteration or try and error. All simulations run stable and the results can be compared in figure \ref{fig:shocktubeNu1}, which shows the results for the different Knudsen numbers $\tau$ depending on the model (linear HSM or non-linear QBME) and depending on the order of the spatial and temporal discretization.

When comparing the left and right column of figure \ref{fig:shocktubeNu1}, we can clearly see that a higher-order discretization leads to a sharper profile, accurately resolving the limiting Euler solution. The first order scheme yields more diffusion, damps the shocks, and does not predict the shock front accurately. Due to the use of the higher-order scheme, larger spatial discretizations are possible and allow for large $\Delta t$ according to the CFL number. However, this makes the use of PI even more necessary, as the fast scales require a small time step.

Both the HSM and the QBME model approach the hydrodynamic limit for decreasing $\tau$.
\begin{figure}[htb!]
    \centering
    \begin{subfigures}
    \subfloat[HSM9, first order. \label{fig:shocktubeHSM10FORCE1}
    ]{\begin{overpic}[width=0.49\textwidth]{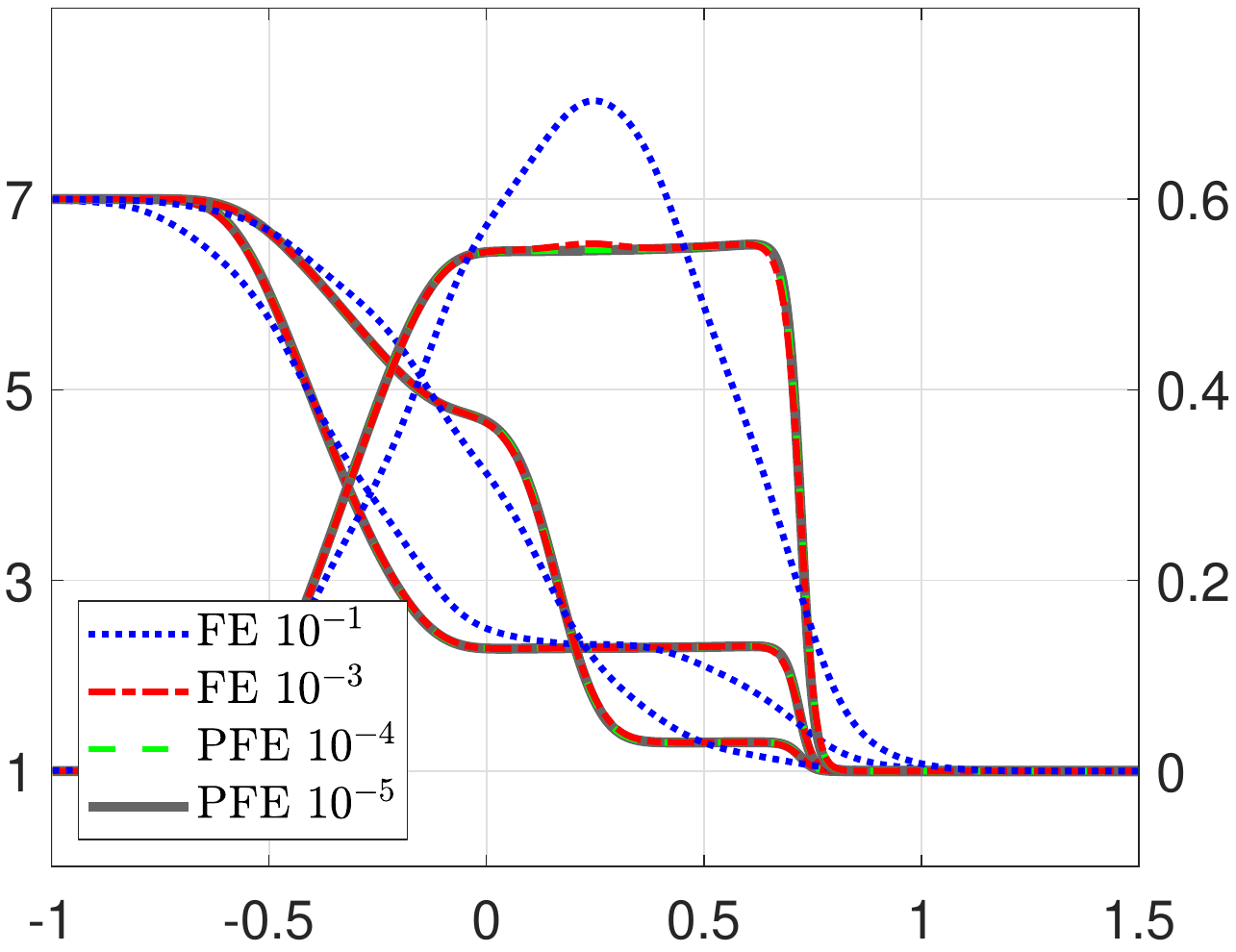}
        \put(0,73){$p,$}
        \put(0,68){$\rho$}
        \put(93,70){$\vel$}
        \put(17,47){$p$}
        \put(25,60){$\rho$}
        \put(58,60){$\vel$}
    \end{overpic}}~
    \subfloat[HSM9, third order. \label{fig:shocktubeHSM10FORCE3}
    ]{\begin{overpic}[width=0.49\textwidth]{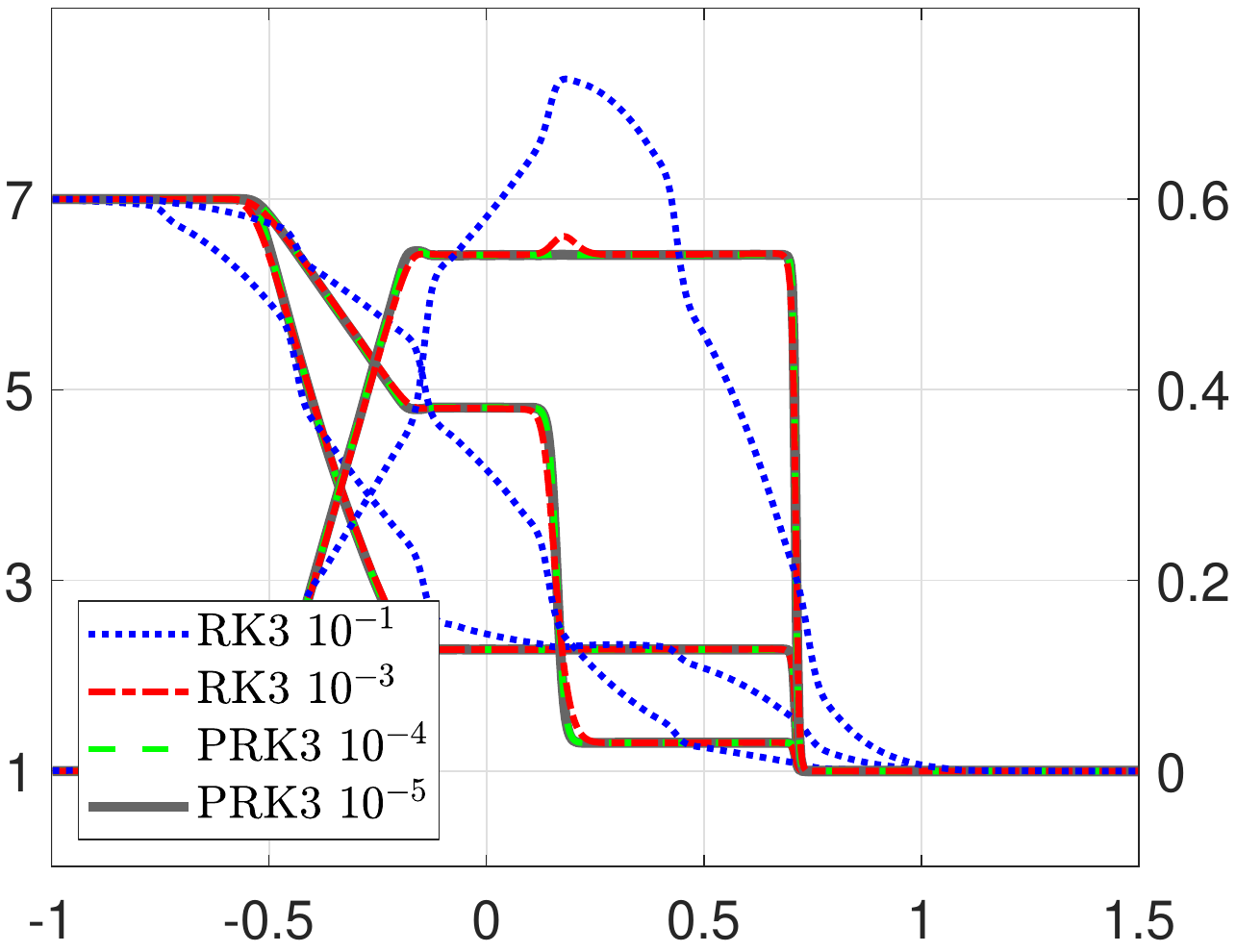}
        \put(0,73){$p,$}
        \put(0,68){$\rho$}
        \put(93,70){$\vel$}
        \put(17,47){$p$}
        \put(25,60){$\rho$}
        \put(58,60){$\vel$}
    \end{overpic}}\\
    \subfloat[QBME9, first order. \label{fig:shocktubeQBME10FORCE1}
    ]{\begin{overpic}[width=0.49\textwidth]{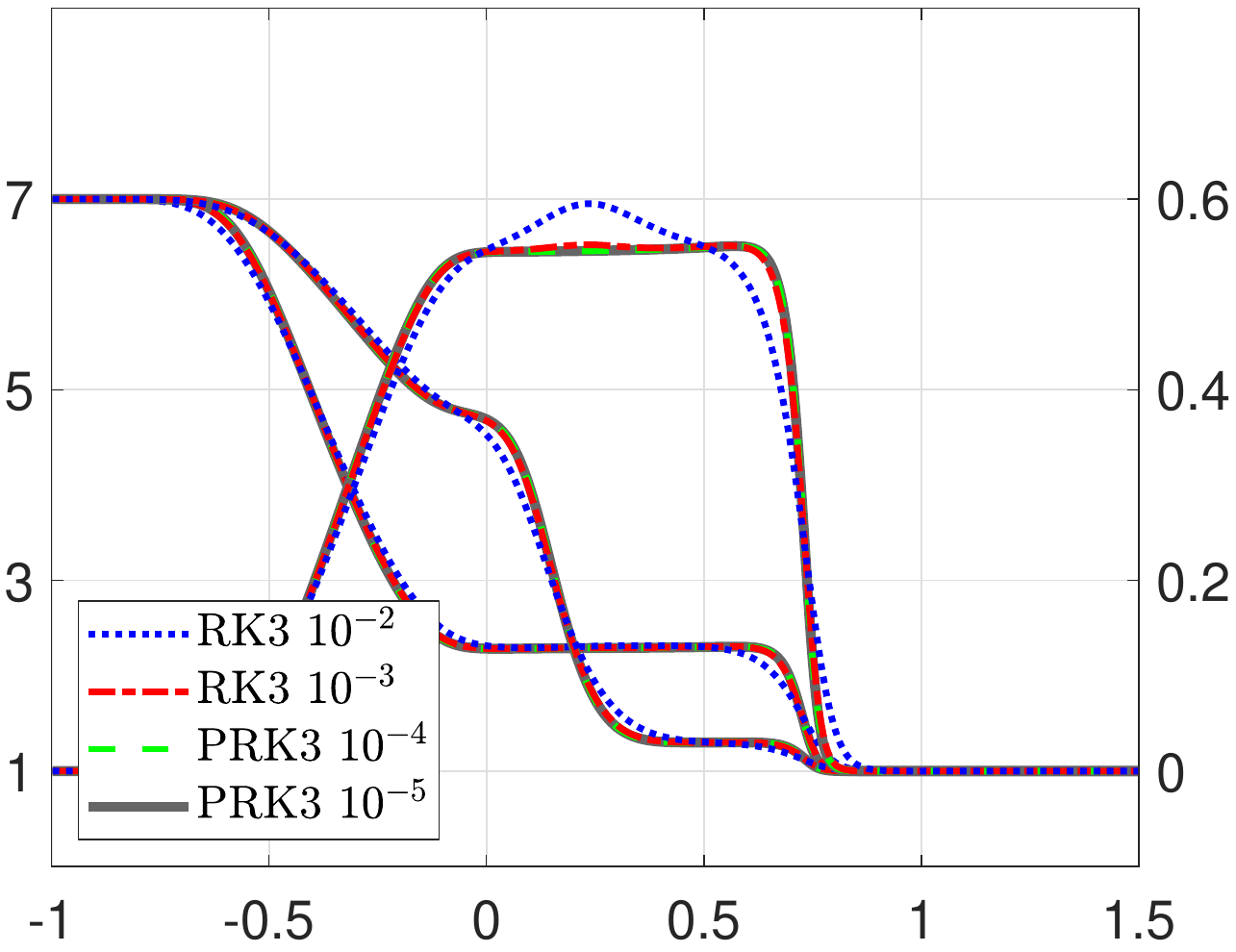}
        \put(0,73){$p,$}
        \put(0,68){$\rho$}
        \put(93,70){$\vel$}
        \put(17,47){$p$}
        \put(25,60){$\rho$}
        \put(58,60){$\vel$}
    \end{overpic}}~
    \subfloat[QBME9, third order. \label{fig:shocktubeQBME10FORCE3}
    ]{\begin{overpic}[width=0.49\textwidth]{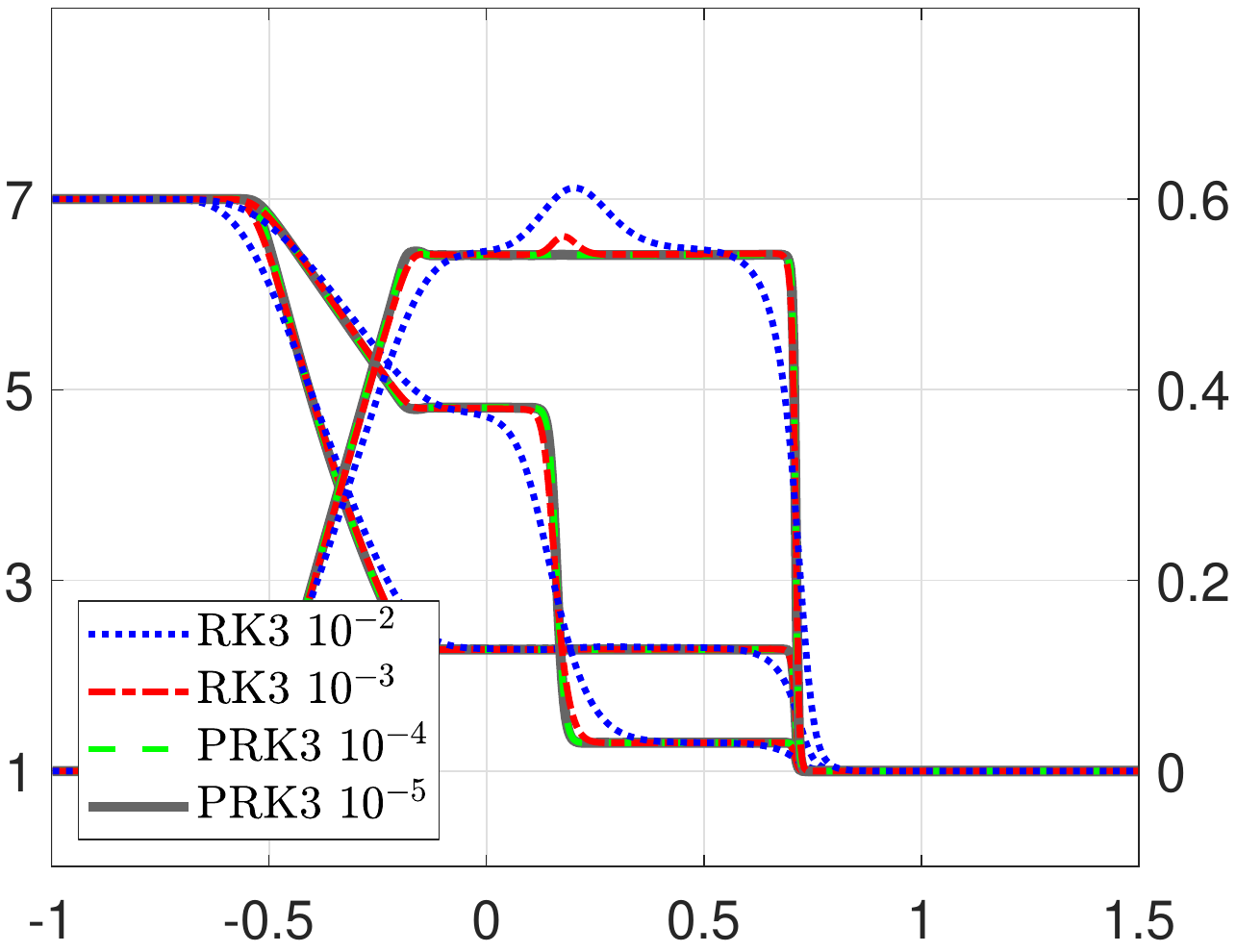}
        \put(0,73){$p,$}
        \put(0,68){$\rho$}
        \put(93,70){$\vel$}
        \put(17,47){$p$}
        \put(25,60){$\rho$}
        \put(58,60){$\vel$}
    \end{overpic}}
    \end{subfigures}
    \caption{Shock tube for constant collision frequency $\nu=1$ and varying $\tau$.}
    \label{fig:shocktubeNu1}
\end{figure}

Table \ref{tab:parameter_PFE} exemplarily indicates the stable parameter settings used for the QBME model, $\nu=1$ and relaxation time $\tau=10^{-5}$. As expected, the chosen inner time step size has to be chosen as $\delta t = \tau$ in this simple test case to prevent instabilities (shown in red). This indicates that the stable PI parameters for the non-linear QBME models agree very well with the prediction of the linear stability analysis in this test case.
\begin{table}[H]
    \centering
    \caption{Stability of different parameter settings for PFE. QBME model, $\nu=1, \tau=10^{-5}$. Base parameters $K=1$, $\delta=1 \cdot 10^{-5}$. Parameters predicted by linear stability analysis indicated by gray column. Instable simulation indicated by \color{red}{red} numbers.}
    \label{tab:parameter_PFE}
    \begin{tabular}{c||c|c| >{\columncolor{light-gray}}c|c|c}\hline
      $\delta t / 10^{-5}$    & $\color{red}{1.5} $ & $\color{red}{1.1}$ & $1$ & $\color{red}{0.9}$ & $\color{red}{0.5}$  \\ \hline
    \end{tabular}
\end{table}

\subsubsection{Space-dependent collision frequency \texorpdfstring{$\nu=\rho(x)$}{}}
When choosing a space-dependent collision frequency, the linear stability analysis in section \ref{sec:LSA_cont} revealed an extended eigenvalue spectrum that needs to be taken into account to obtain stability, see section \ref{sec:parameters}. Note that we use the same jump in density from $\rho_L=7$ to $\rho_R=1$ as was used to analyze the stability, so that the results from there are directly transferrable. In this test case we thus need to make the following adjustments to the numerical method (see also \cite{Melis2019}):
\begin{itemize}
  \item[1.] $\tau = 10^{-2}$: transitional regime. Performing the same stability analysis as in \ref{sec:LSA_cont} for this $\tau$, we observe that there is no scale separation yet and the "fast" modes lie well within the slower modes. We can thus use the FE scheme with macroscopic $\Delta t$ according to the CFL number.
  \item[3.] $\tau = 10^{-3}$: transitional regime. Figure \ref{fig:SpectrumNuRho71} shows a beginning separation. The derivation in section \ref{sec:parameters} shows that a single level PFE method is still stable and we use $\delta t=1.4\cdot 10^{-4}$ with $K=1$.
  \item[4.] $\tau = 10^{-4}$: hydrodynamic regime. Due to the stronger separation, also shown in figure \ref{fig:SpectrumNu1}, we need to choose more inner time steps to achieve an A-stable method that has a connected stability region. According to \ref{sec:parameters}, we choose $\delta t = \tau / 7 = 1.4 \cdot 10^{-5}$ and $K=6$.
  \item[5.] $\tau \leq 10^{-5}$: hydrodynamic regime. For this test case, the derivation in section \ref{sec:parameters} shows that it is necessary to employ a telescopic method. Following this derivation we use a TPFE method with $K=6$ on both inner levels, $\delta t_0=1.4\cdot 10^{-6}$, and $\delta t_1=3.0 \cdot 10^{-5}$.
\end{itemize}
Note again that all parameter choices can be directly obtained by means of the stability analysis \ref{sec:LSA}, section \ref{sec:parameters}, and the properties of the PI method \ref{sec:Num}.

The results in figure \ref{fig:shocktubeNuRho} show a clear convergence of
the QBME model towards the hydrodynamic equilibrium. The same holds for the HSM model (not shown). Similar as for the previous test case, we see that there is a significant gain in accuracy when using a higher-order spatial discretization as shown in the right column, where the third-order FORCE scheme is used, in comparison to the left column, where the first-order FORCE scheme was employed.

\begin{figure}[htb!]
    \centering
    \begin{subfigures}
    \subfloat[QBME9, first order. \label{fig:shocktuberhoQBME10FORCE1}
    ]{\begin{overpic}[width=0.49\textwidth]{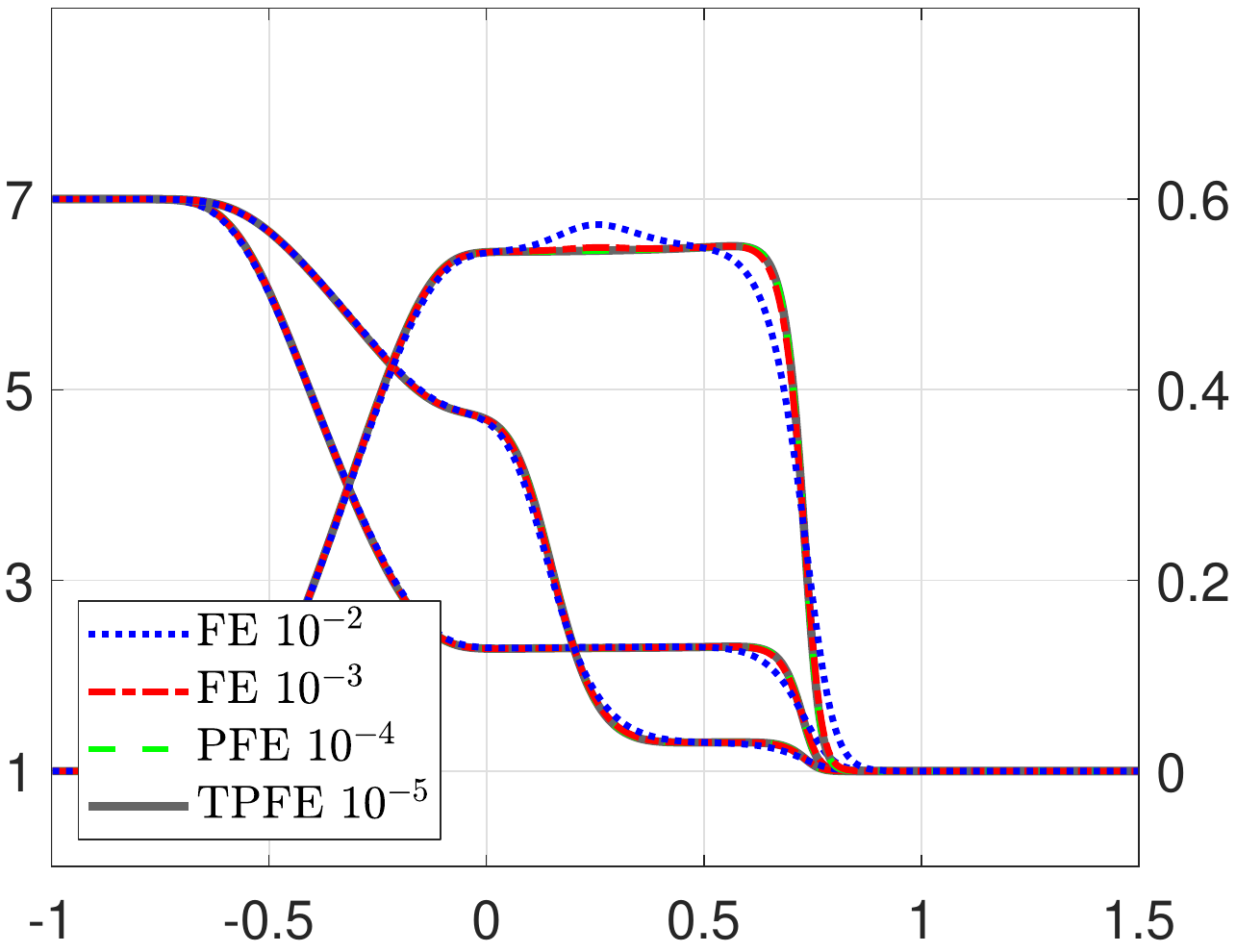}
        \put(0,73){$p,$}
        \put(0,68){$\rho$}
        \put(93,70){$\vel$}
        \put(17,47){$p$}
        \put(25,60){$\rho$}
        \put(58,60){$\vel$}
    \end{overpic}}~
    \subfloat[QBME9, third order. \label{fig:shocktuberhoQBME10FORCE3}
    ]{\begin{overpic}[width=0.49\textwidth]{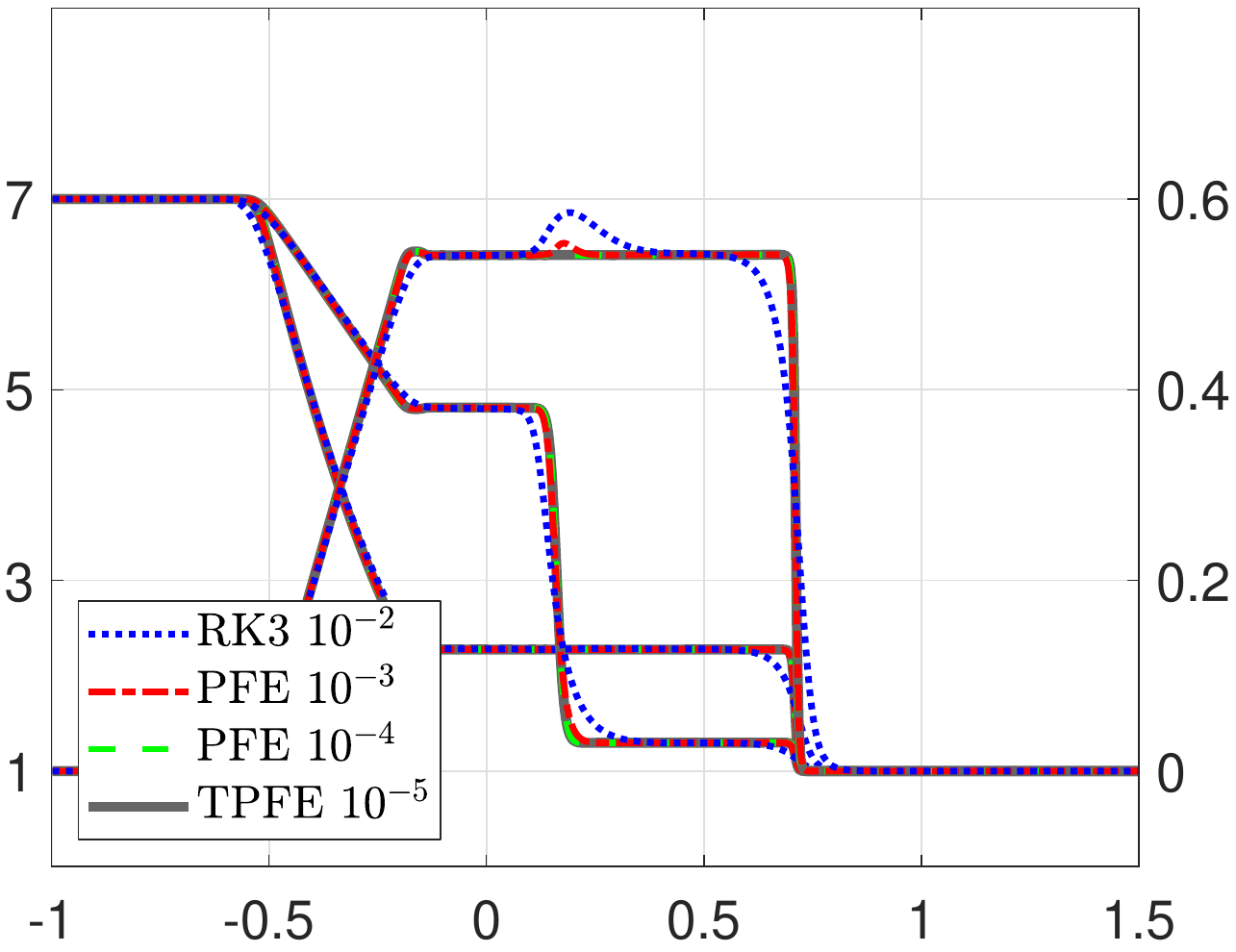}
        \put(0,73){$p,$}
        \put(0,68){$\rho$}
        \put(93,70){$\vel$}
        \put(17,47){$p$}
        \put(25,60){$\rho$}
        \put(58,60){$\vel$}
    \end{overpic}}
    \end{subfigures}
    \caption{Shock tube for space-dependent collision frequency $\nu=\rho(x)$ and varying $\tau$.}
    \label{fig:shocktubeNuRho}
\end{figure}


Table \ref{tab:parameter_TPFE} exemplarily indicates the stable parameter settings of the TPFE method used for the QBME model, $\nu=\rho(x)$ and relaxation time $\tau=10^{-5}$. The predicted number of inner time steps $K=6$ is indeed the minimum stable value. A further reduction of steps leads to instability problems (shown in red). For the inner time step size $\delta_0$, not only the predicted value $\delta_0=1.4 \cdot 10^{-6}$ is stable, but also values $2\cdot 10^{-6} \geq \delta_0 \geq 1.3 \cdot 10^{-6}$. For values outside this region it can be assumed that the stability region splits up into two domains leading to instability. Also for the intermediate time step size $\delta t_1$, larger and smaller values are possible. This might be attributed to the smaller extrapolation size on the intermediate level. Table \ref{tab:parameter_TPFE} indicates that the parameters chosen with the help of the linear stability analysis are not the only stable choices for the TPFE method in this non-linear QBME test case. However, the choice $K=6$ minimizes the computational cost in this case and the other parameters $\delta_0,\delta_1$ lie well within the set of stable parameters.
\begin{table}[H]
    \centering
    \caption{Stability of different parameter settings for TPFE. QBME model, $\nu=\rho(x), \tau=10^{-5}$. Each line changes only one parameter from the chosen parameters predicted by linear stability analysis $K=6$, $\delta_0=1.4 \cdot 10^{-6}$, $\delta_1=3 \cdot 10^{-5}$ indicated by gray column. Instable simulation indicated by \color{red}{red} numbers.}
    \label{tab:parameter_TPFE}
    \begin{tabular}{c||c|c| >{\columncolor{light-gray}}c|c|c}\hline
      $K$                       & $8$ & $7$ & $6$ & $\color{red}{5}$ & $\color{red}{4}$  \\ \hline
      $\delta t_0 / 10^{-6}$    & $\color{red}{2.5}$ & $2$ & $1.4$ & $1.3$ & $\color{red}{1.2}$  \\ \hline
      $\delta t_1 / 10^{-5}$    & $\color{red}{5}$ & $4$ & $3$ & $2$ & $1$  \\ \hline
    \end{tabular}
\end{table}

\subsubsection{Model comparison}
Based on the previous tests, we have a closer look at the model differences in the limit of smaller $\tau$ in figure \ref{fig:shocktubeNuRhoModel}. Each graph includes the different models for the same relaxation time $\tau$. Figure \ref{fig:shocktubeModelsKn0p0001FORCE3} shows that all models eventually converge to the same hydrodynamic limit and even for $\tau=10^{-3}$ there are no visible differences. For $\tau=10^{-2}$, the models with constant collision frequency $\nu=1$ differ slightly from the space-dependent collision frequency $\nu=\rho(x)$ due to the large density jump. In the kinetic regime for $\tau=10^{-1}$, the linearized HSM model begins to show deviations from the non-linear QBME model, too. We conclude that besides the kinetic region, the model differences are most prominent in the transitional regime, whereas the hydrodynamic regime will be simulated accurately by either model.

\begin{figure}[htb!]
    \centering
    \begin{subfigures}
    \subfloat[$\tau=10^{-1}$. \label{fig:shocktubeModelsKn0p1FORCE3}
    ]{\begin{overpic}[width=0.49\textwidth]{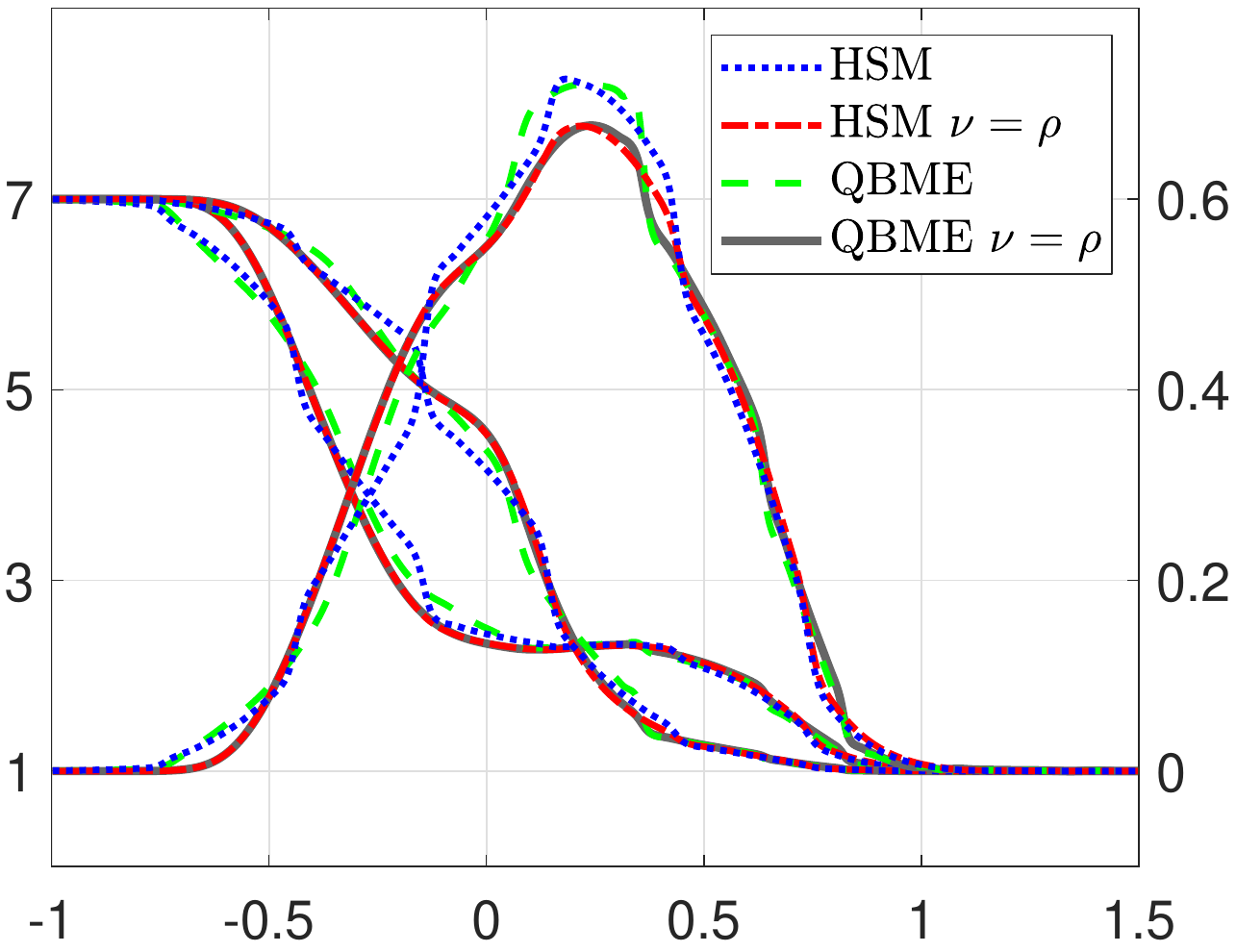}
        \put(0,73){$p,$}
        \put(0,68){$\rho$}
        \put(93,70){$\vel$}
        \put(17,47){$p$}
        \put(25,60){$\rho$}
        \put(58,50){$\vel$}
    \end{overpic}}~
    \subfloat[$\tau=10^{-2}$. \label{fig:shocktubeModelsKn0p01FORCE3}
    ]{\begin{overpic}[width=0.49\textwidth]{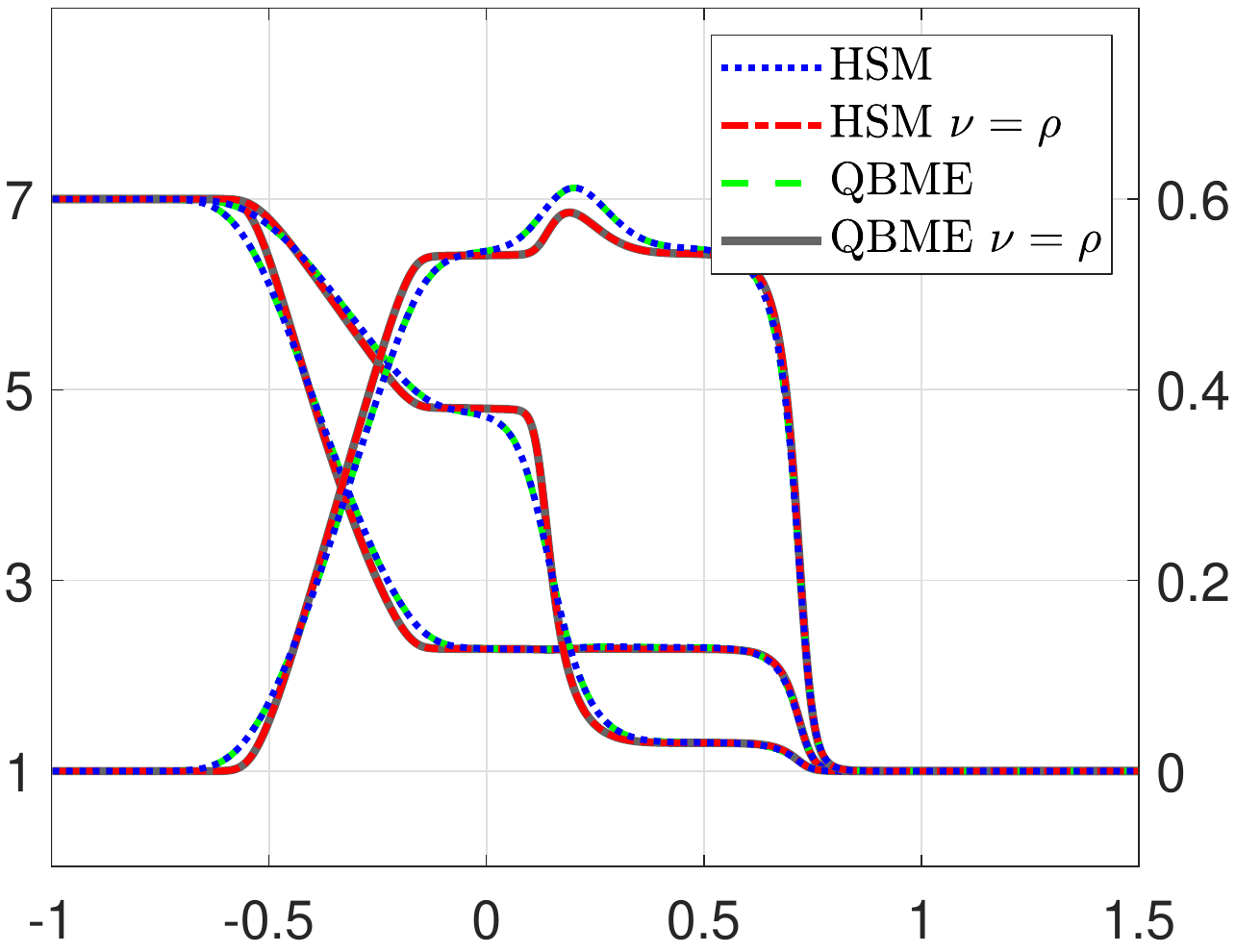}
        \put(0,73){$p,$}
        \put(0,68){$\rho$}
        \put(93,70){$\vel$}
        \put(17,47){$p$}
        \put(25,60){$\rho$}
        \put(58,50){$\vel$}
    \end{overpic}}\\
    \subfloat[$\tau=10^{-3}$. \label{fig:shocktubeModelsKn0p001FORCE3}
    ]{\begin{overpic}[width=0.49\textwidth]{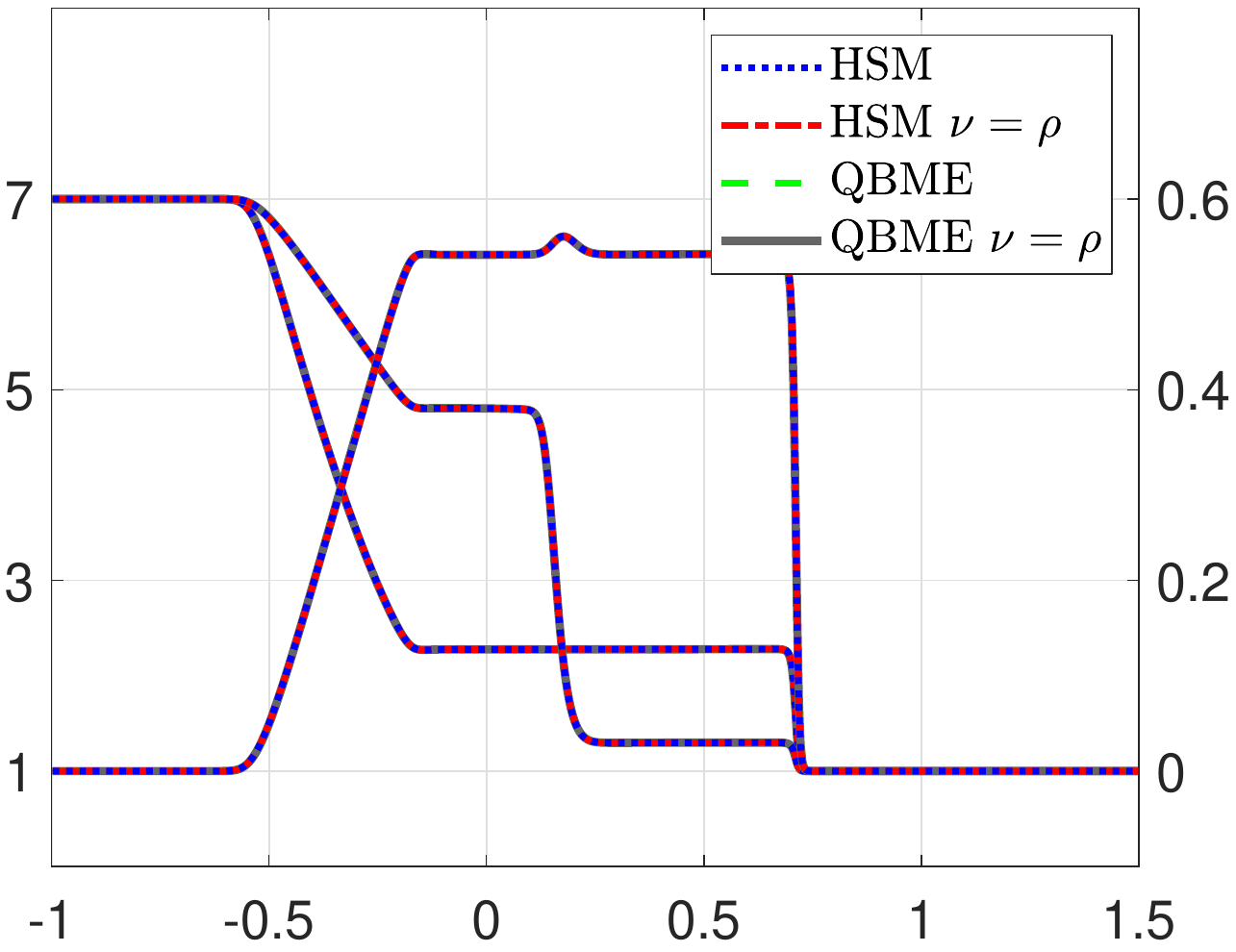}
        \put(0,73){$p,$}
        \put(0,68){$\rho$}
        \put(93,70){$\vel$}
        \put(17,47){$p$}
        \put(25,60){$\rho$}
        \put(58,50){$\vel$}
    \end{overpic}}~
    \subfloat[$\tau=10^{-4}$. \label{fig:shocktubeModelsKn0p0001FORCE3}
    ]{\begin{overpic}[width=0.49\textwidth]{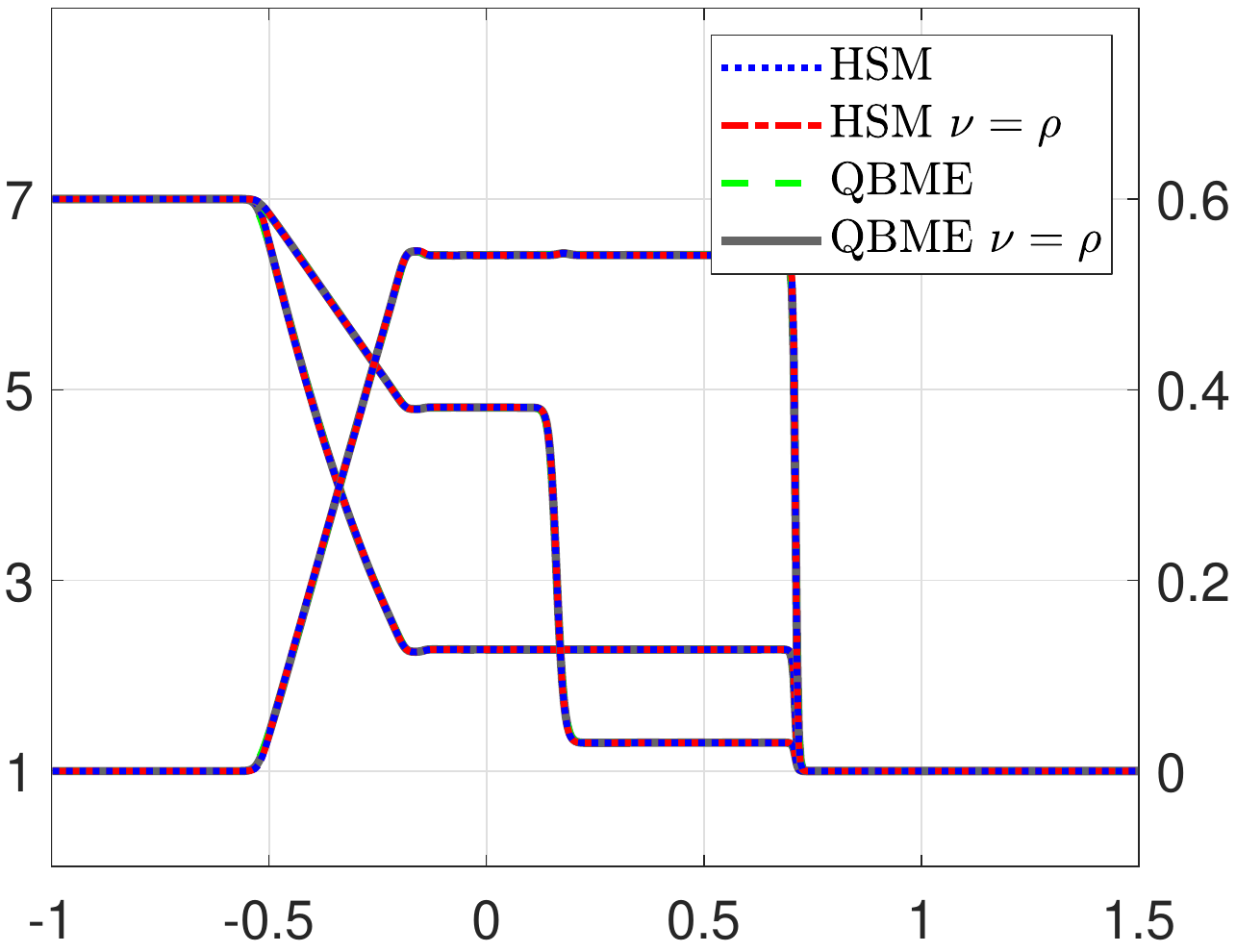}
        \put(0,73){$p,$}
        \put(0,68){$\rho$}
        \put(93,70){$\vel$}
        \put(17,47){$p$}
        \put(25,60){$\rho$}
        \put(58,50){$\vel$}
    \end{overpic}}
    \end{subfigures}
    \caption{Model comparison for shock tube using space-dependent collision frequency $\nu=\rho(x)$ and varying $\tau$.}
    \label{fig:shocktubeNuRhoModel}
\end{figure}

\subsection{Two-beam problem}
The two beam test case was used in \cite{Koellermeier2017} and \cite{Schaerer2015} to investigate different moment models using the 1D BGK equation with constant collision frequency time $\nu=1$. Here we first perform the standard test and then do a variation with piecewise constant collision frequency time.

The initial Riemann data for the left-hand side and the right-hand side of the domain, respectively, is given by
\begin{equation}
    \vect{u}_M^L = \left( 1,0.5,1,0,\ldots,0\right)^T, \quad \quad \vect{u}_M^R = \left( 1,-0.5,1,0,\ldots,0\right)^T,
    \label{e:2beam_IC}
\end{equation}
modeling two colliding Maxwellian distributed particle beams. This test case is especially challenging as it is difficult to represent the analytical solution using a polynomial expansion. In the free streaming case $\textrm{Kn} = \infty$ the analytical solution is a sum of two Maxwellians according to \cite{Schaerer2015}.

The numerical tests are performed on the computational domain $[-10,10]$, discretized using $500$ points and the end time is $t_{\textrm{END}}=0.1$ using a constant macroscopic time step according to a CFL number of $0.5$ for all tests. This results in the same macroscopic time step size $\Delta t = 3.85 \cdot 10^{-4}$ as in the shock tube test case.

The moment models again use $M=9$. A discrete velocity method can be used as reference solution. In \cite{Schaerer2015} a DVM solution was computed using 2000 cells in physical space and 600 variables for the discretization of the microscopic velocity space. Note that the DVM method is computationally much more expensive in comparison to the lower-dimensional moment models described in section \ref{sec:models}. In \cite{Koellermeier2017}, extensive comparisons of the moment models with those reference solutions were made for the rarefied regime. It was obtained that the relative error of the heat flux was only about $4\%$ for a relaxation time of $\tau=0.1$. For a smaller relaxation time as used in all our tests of the current work, the error is naturally even smaller. We thus assume that the model error of the moment model can be neglected and do not show a comparison with the DVM method for these small values of the relaxation time. For more details on the accuracy of moment models for the two-beam model, we refer to \cite{Koellermeier2017}.

\subsubsection{Constant collision frequency \texorpdfstring{$\nu=1$}{}}
For this symmetric test case, we use a constant $\nu=1$ and plot only the left part of the spatial domain. The parameter choices for the different relaxation times $\tau$ can directly be carried over from the previous test case. Figure \ref{fig:2beamNu1QBME} shows the results of the QBME model for first order and third order spatial discretization for pressure $p$ and heat flux $Q$, computed as
\begin{equation}
\label{e:normalized_heat_fluxQBME}
    Q = \frac{6 f_3}{\rho \sqrt{\theta}^3}
\end{equation}
for the QBME model.

Similar to the shock tube test case, we see that there is a significant gain in accuracy when using a third-order spatial discretization. This leads to a possibly coarser distribution of cells and an overall gain in computational efficiency. However, this makes it necessary to use PI earlier, as the macroscopic time step $\Delta t$ is larger. Here we use PI for the cases $\tau=10^{-4}$ and $\tau=10^{-6}$ using $K=1$ and PFE or PRK3, respectively. Even though the graphs for the pressure $p$ in the left column of figure \ref{fig:2beamNu1QBME} seems to be already converged for $\tau=10^{-3}$, we can clearly see in the respective figures for the heat flux $Q$ that there is still a non-equilibrium heat flux present. However, for $\tau=10^{-4}$ and $\tau=10^{-6}$, the models have almost completely converged. This test case shows that PI is indeed necessary to obtain a converged solution for the heat flux.

\begin{figure}[htb!]
    \centering
    \begin{subfigures}
    \subfloat[$p$, first order. \label{fig:2beamQBME10FORCE1p}
    ]{\includegraphics[width=0.5\linewidth]{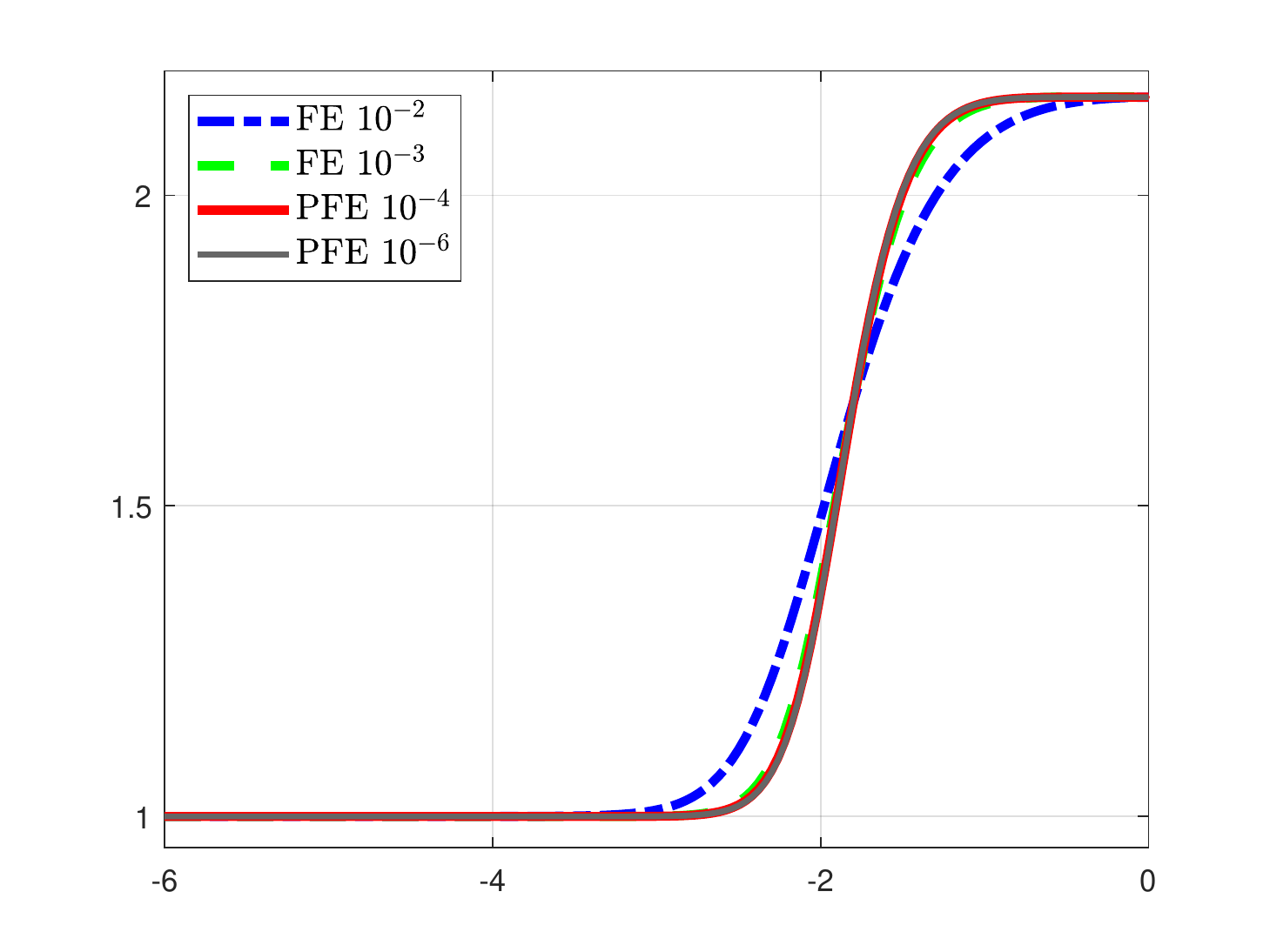}}
    \subfloat[$Q$, first order. \label{fig:2beamQBME10FORCE1Q}
    ]{\includegraphics[width=0.5\linewidth]{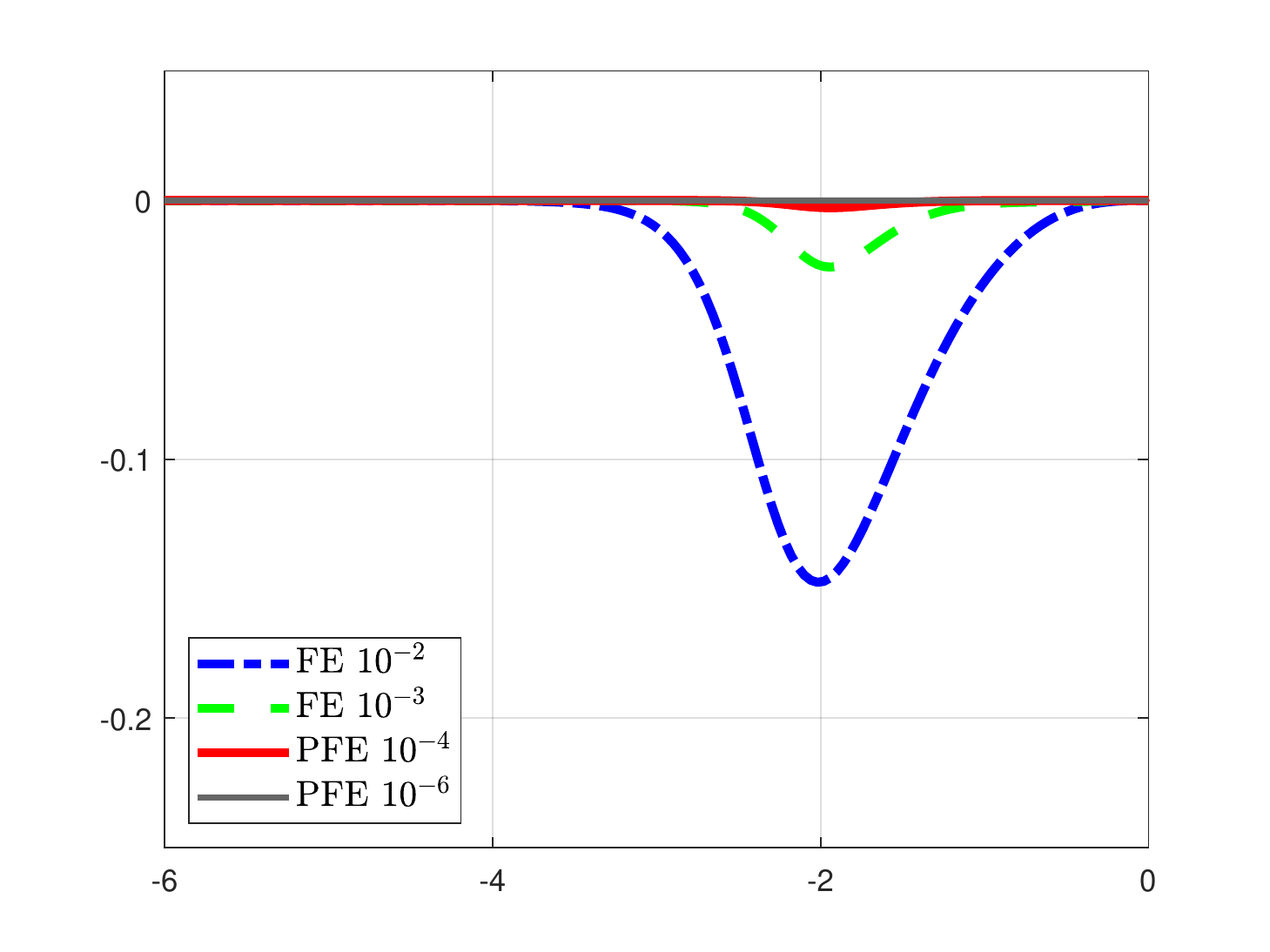}}\\
    \subfloat[$p$, third order. \label{fig:2beamQBME10FORCE3p}
    ]{\includegraphics[width=0.5\linewidth]{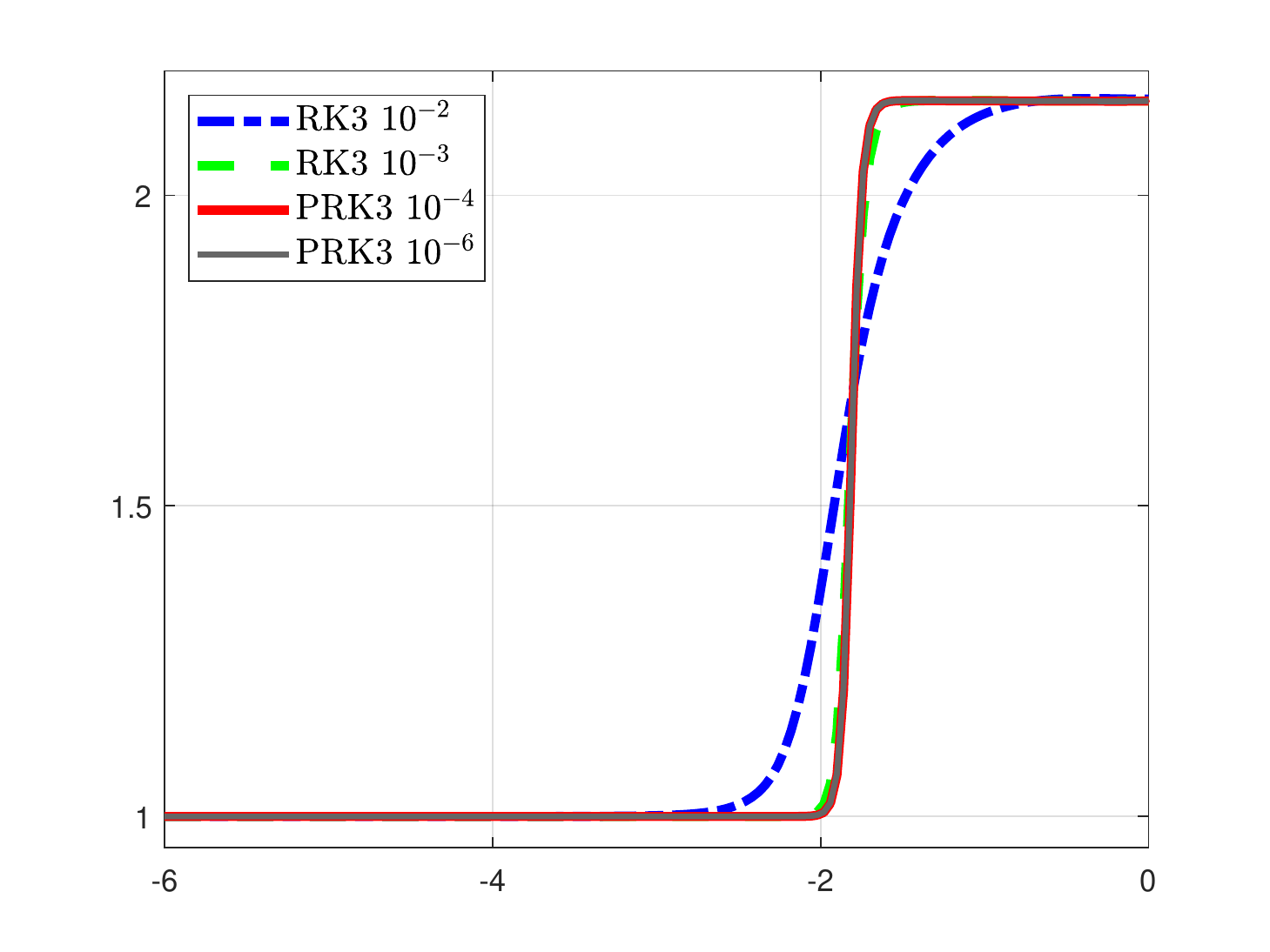}}
    \subfloat[$Q$, third order. \label{fig:2beamQBME10FORCE3Q}
    ]{\includegraphics[width=0.5\linewidth]{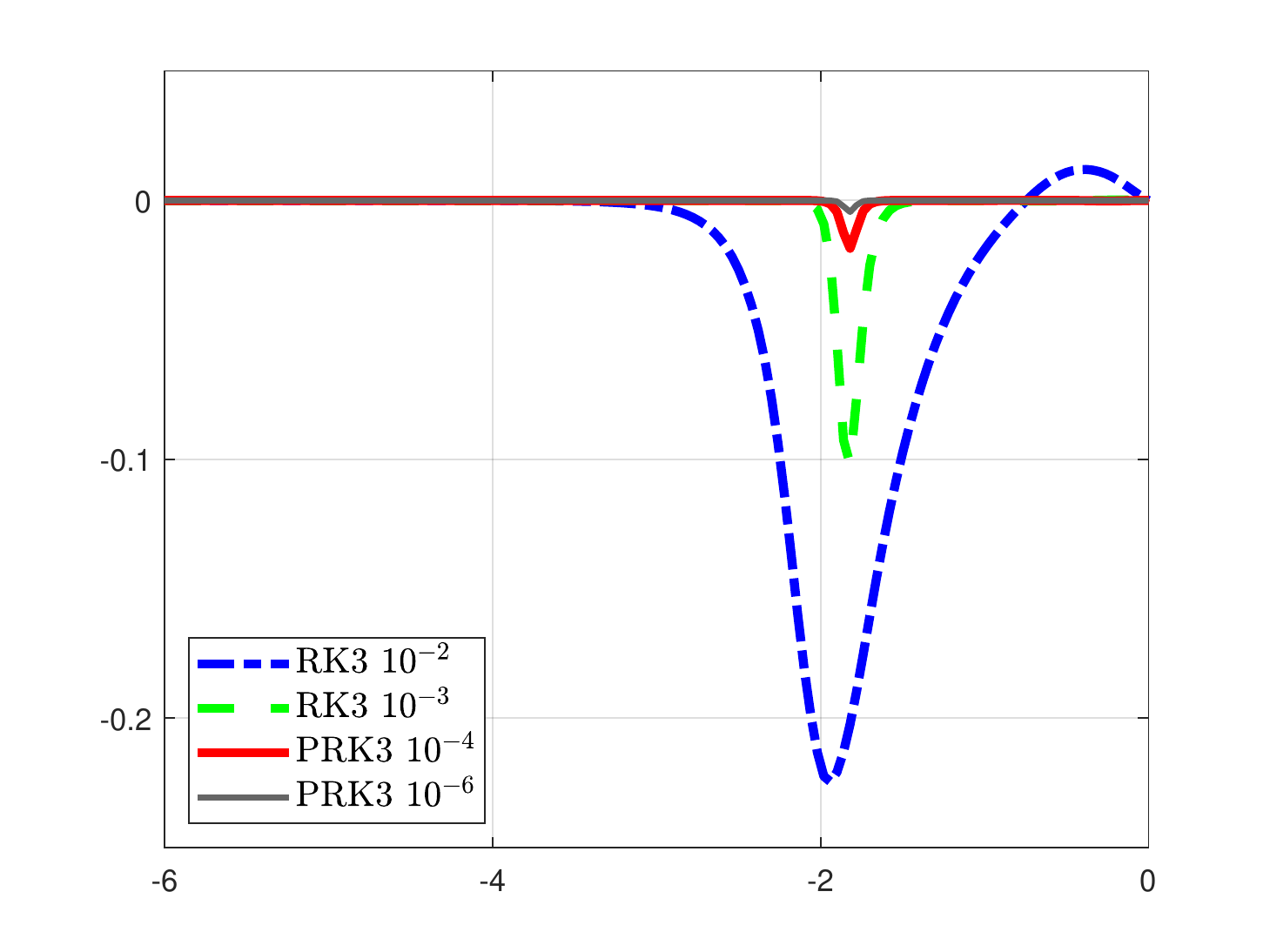}}
    \end{subfigures}
    \caption{Two-beam test for QBME, constant collision frequency $\nu=1$, and varying $\tau$.}
    \label{fig:2beamNu1QBME}
\end{figure}

\subsubsection{Piecewise constant collision frequency \texorpdfstring{$\nu \in \{0.1,1\}$}{}}
Next, we use the QBME model and a piecewise constant collision frequency that has different values $\nu(x<0) = 0.01$ in the left part and $\nu(x>0) = 1$ in the right part of the computational domain. This means that we will have two fast clusters of eigenvalues given by the respective modes in the domain.
This case can be fully described by the stability analysis in section \ref{sec:LSA_piecewise} and the parameters are chosen as follows:
\begin{itemize}
  \item[1.] $\tau = 10^{-2}$: both sides of the domain can be integrated with the standard FE scheme.
  \item[2.] $\tau = 10^{-3}$: A beginning scale separation can be seen according to figure \ref{fig:HSM5FENuPiecewise10Kn0p001} but both fast clusters are still in the region of stability for the time step size $\Delta t$. We can use the FE scheme.
  \item[3.] $\tau = 10^{-4}$: Figure \ref{fig:HSM5FENuPiecewise10Kn0p0001} shows that the fastest scale separated from the remaining two clusters. According to the derivation in section \ref{sec:parameters}, we use the PFE method with $K=1$ and inner time step size $\delta t = 10^{-4}$.
  \item[4.] $\tau \leq 10^{-6}$: Now the clearly separated intermediate cluster requires an additional layer of telescopic PI. The parameter choice is discussed in section \ref{sec:parameters} and we thus choose the TPFE method with $K=1$, $\delta t_0=1\cdot 10^{-6}$, and $\delta t_1=1 \cdot 10^{-4}$. Note, how each level's time step size guarantees the stable integration of one separated cluster.
\end{itemize}

The results shown in figure \ref{fig:2beamNuPiecewise} show a stable solution even for the very small relaxation times $\tau$, despite the large spectral gaps featuring an additional intermediate cluster. The left side of the domain relaxes to the equilibrium solution only for larger relaxation times $\tau$ as the collision frequency $\nu$ is 100 times smaller in this part of the domain. Despite the different propagation speeds due to their different hyperbolic wave structure, the HSM and QBME models give very similar results and we omit a more detailed investigation. The results confirm the observations of the shock tube test case.

\begin{figure}[htb!]
    \centering
    \begin{subfigures}
    \subfloat[QBME9, $p$. \label{fig:2beamPiecewiseQBME10FORCE3p}
    ]{\includegraphics[width=0.5\linewidth]{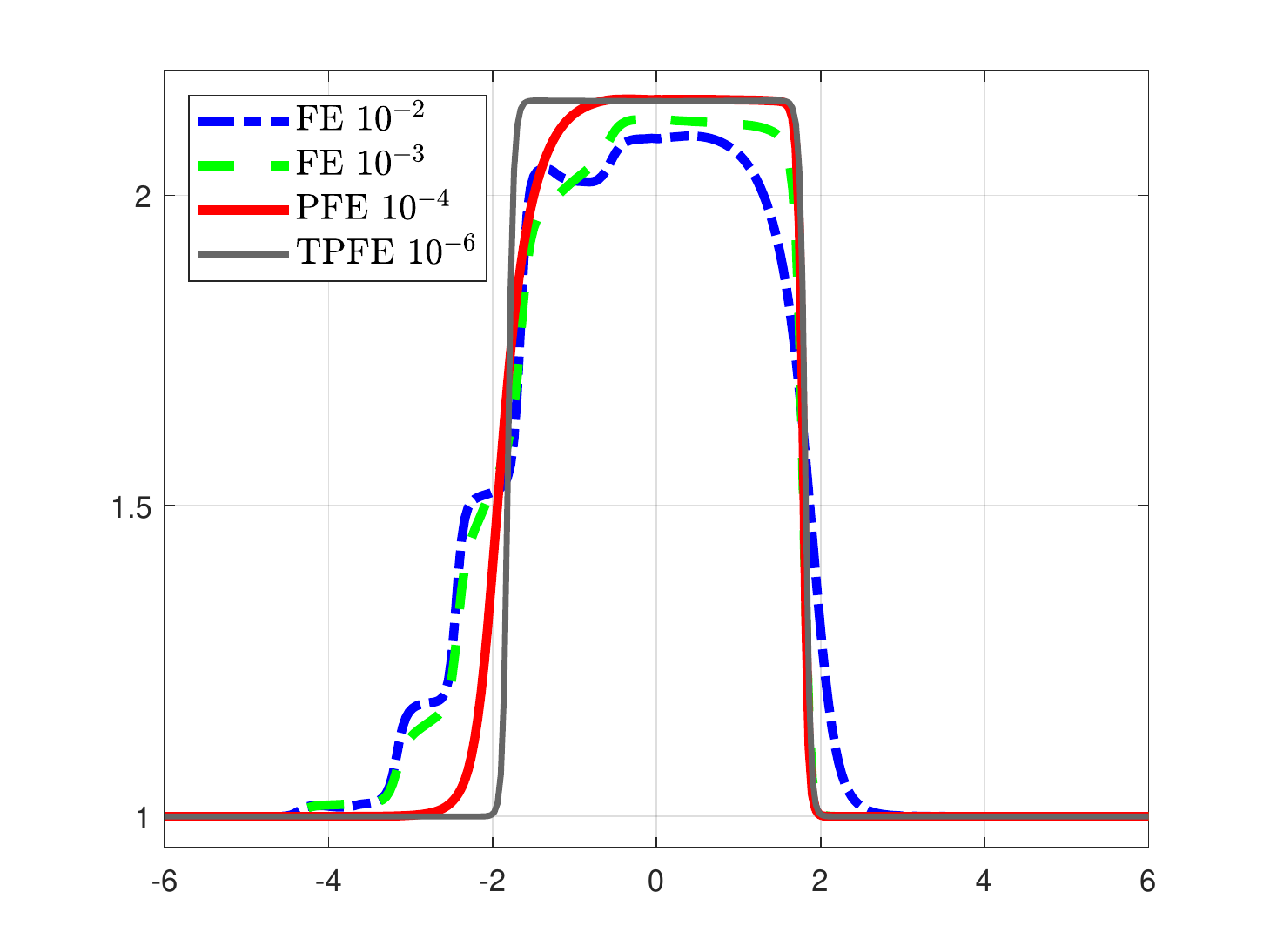}}
    \subfloat[QBME9, $Q$. \label{fig:2beamPiecewiseQBME10FORCE3Q}
    ]{\includegraphics[width=0.5\linewidth]{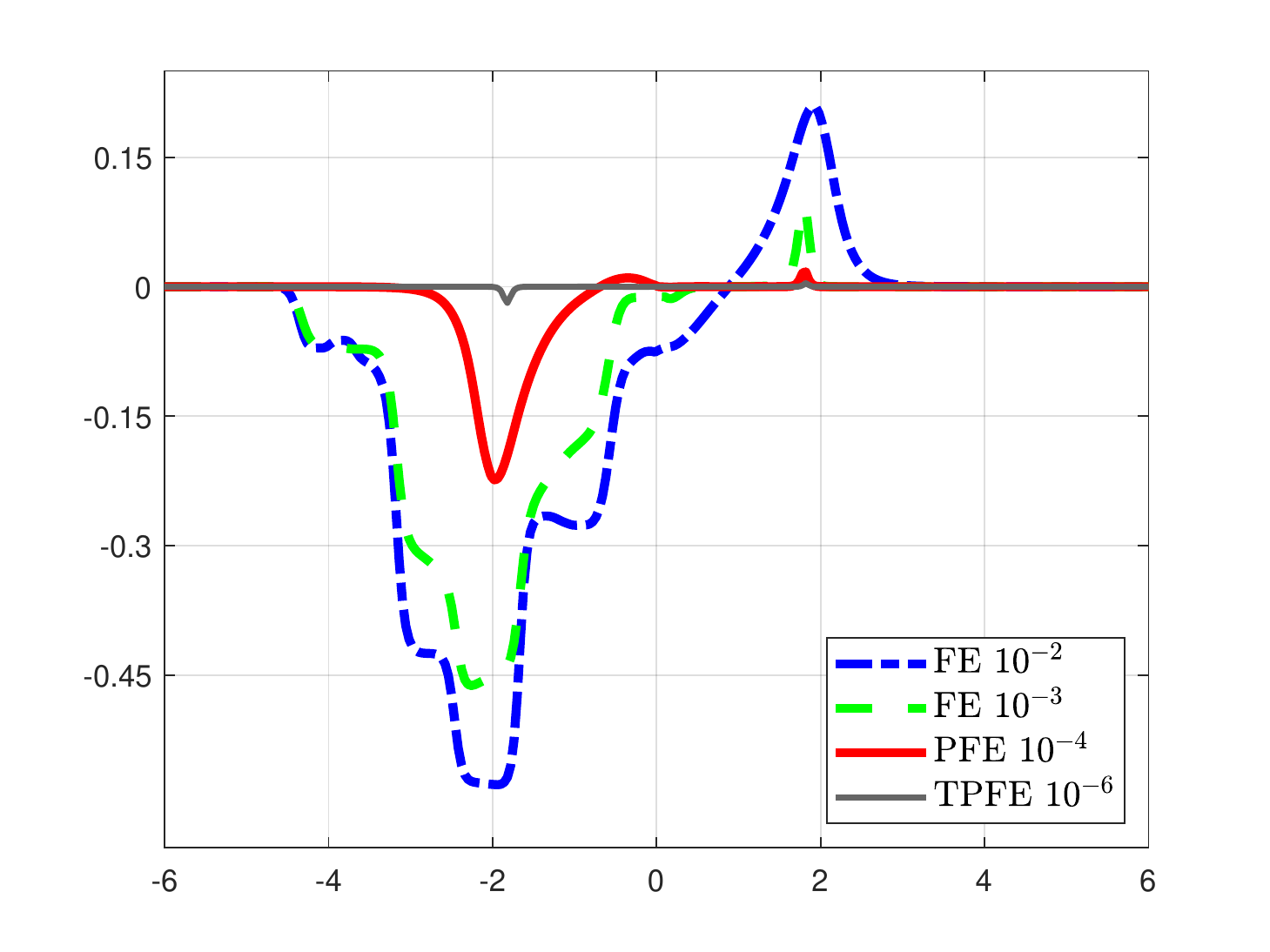}}
    \end{subfigures}
    \caption{Two-beam test for QBME, piecewise constant collision frequency $\nu \in \{0.01,1\}$, third order, and varying $\tau$.}
    \label{fig:2beamNuPiecewise}
\end{figure}

\subsection{Forward facing step}
For a full 2D test case, we present a rarefied supersonic flow over a forward facing step. It has been studied, among others in \cite{Bogolepov1983,Stueer1999} and for the hyperbolic moment models in \cite{Koellermeier2017}.
A flow with Mach number $\textrm{Ma} = 3$ is used at the inlet of a rectangular domain that has a step close to the inlet to generate shock waves.
The two-dimensional domain is composed of an inlet section and a subsequent forward facing step of $20\%$ the height of the inlet section. The domain is shown in figure  \ref{fig:forwardFacingStep}.
\begin{figure}[htb!]
    \centering
    \includegraphics[width=0.6\textwidth]{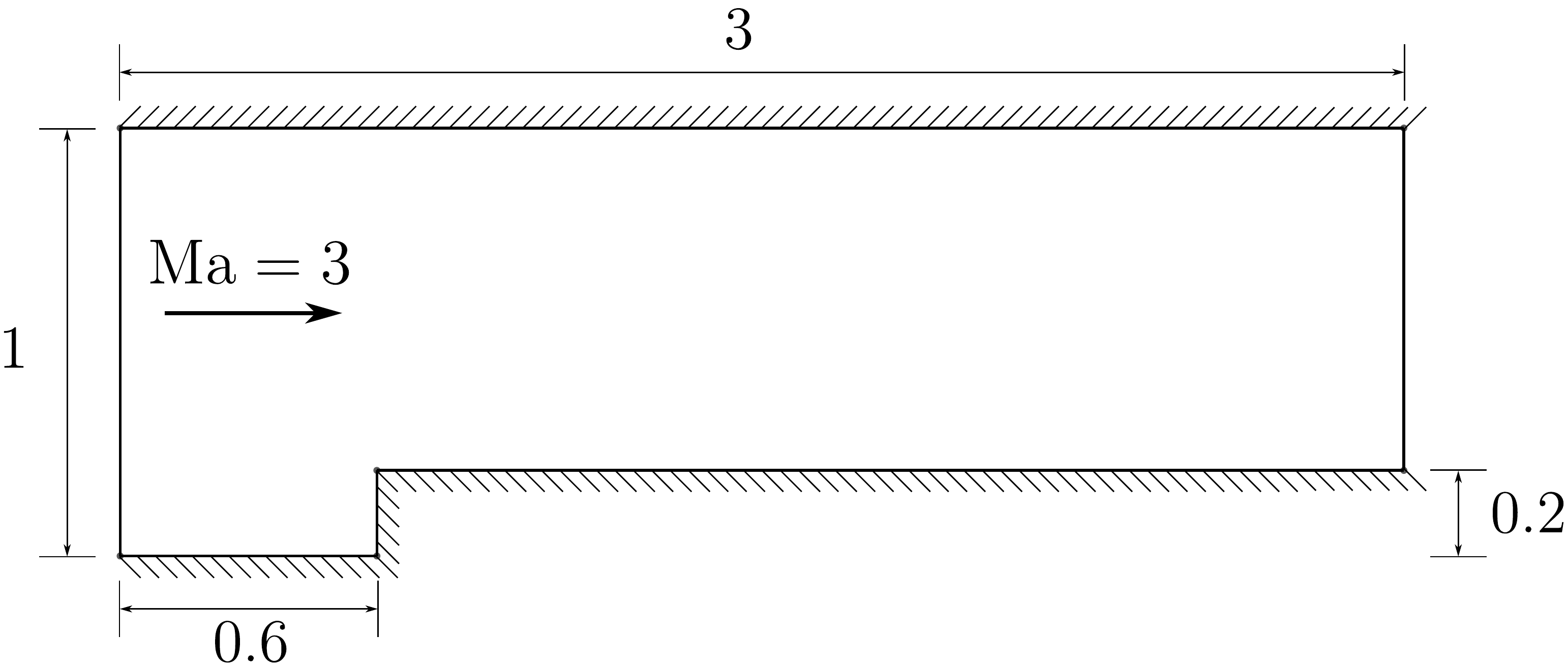}
    \caption{Computational domain for the forward facing step test case, taken from \cite{Koellermeier2018}.}
    \label{fig:forwardFacingStep}
\end{figure}

The flow exhibits a shock, separation and reattachment as well as reflection of the shock wave at the boundaries of the domain. The variety of flow phenomena makes this test case an interesting application future developments of the PI method as there are large parts of purely equilibrium flow allowing for an adaptive use of applying projective integration.

The computational grid is composed of $31,951$ unstructured quadrilateral grid cells and each cell extends over about $\Delta x \approx 0.01$ in one direction. We use the two-dimensional QBME model with $M=3$, see \cite{Koellermeier2018} for the explicit form of the equations. The propagation speeds of the transport part can be evaluated from \eqref{e:HME_EV} to derive the time step size. The macroscopic time step size according to a CFL number of $0.5$ is then $\Delta t = 0.001$. We compute until $t_{\textrm{end}}=6$ and plot the scalar pressure $p$, see \cite{Koellermeier2017} for details.

\subsubsection{BGK collision operator with collision frequency \texorpdfstring{$\nu = \rho$}{}}
We first test the BGK collision operator with space-dependent collision frequency $\nu = \rho(x)$ as described in section \ref{sec:BGK} and analyzed in section \ref{sec:LSA_cont}.

The test case yields a stationary solution and from tests in the hydrodynamic regime as well as in the kinetic regime, we can identify the following range for the density $\rho \in [\rho_{min},\rho_{max}]=[1,10]$. For different relaxation times, we then choose the following time stepping method for a stable integration of all modes:
\begin{itemize}
  \item[1.] $\tau = 10^{-1}$: kinetic regime. There is no scale separation and the standard FE scheme with $\Delta t = 0.001$ is sufficient to yield a stable method.
  \item[3.] $\tau = 10^{-2}$: transitional regime. Due to the large $\Delta t=0.001$ and the non-linear collision frequency $\nu \in [1,10]$, this test case is already beyond the stability region of the standard FE method. We use a PFE method with $K=1$ and $\delta t = 2.5\cdot 10^{-4}$ to obtain stability.
  \item[4.] $\tau = 10^{-3}$: transitional regime. The extended spectrum requires a PFE method with connected stability region and thus the use of a larger $K$. Following the derivation in section \ref{sec:parameters}, we get $\delta t=10^{-4}$ and $K=3$ for an extrapolation factor of $N\approx10$.
  \item[5.] $\tau = 10^{-4}$: hydrodynamic regime. Using section \ref{sec:parameters}, one level of PI is no longer enough. The parameters obtained by the outlined steps yields one additional telescopic level for $K=3$ and $\delta t_0=10^{-5}$, $\delta t_1=10^{-4}$.
\end{itemize}
The chosen parameters ensure stability for all cases and yield the results shown in figure \ref{fig:ffsBGK}. The results can be compared to the forward facing step simulations in \cite{Koellermeier2018}, where a different right-hand side treatment was used. We see a clear agreement of both methods. The result for large relaxation time $\tau=10^{-1}$ in figure \ref{fig:0p01QBMEM3Kn0p1FEt1} shows a clearly smoother shock profile, while the shock becomes more and more pronounced with decreasing relaxation time $\tau$. Despite the different numerical methods, we do not see evidence that the macroscopic solution is spoiled by additional diffusion or wrong propagation speeds, which shows that the PI approach leads to a consistent solution towards the hydrodynamic limit.

\begin{figure}[htbp!]
    \centering
    \begin{subfigures}
    \subfloat[BGK $\tau=10^{-1}$, FE. \label{fig:0p01QBMEM3Kn0p1FEt1}
    ]{\begin{overpic}[width=0.95\textwidth]{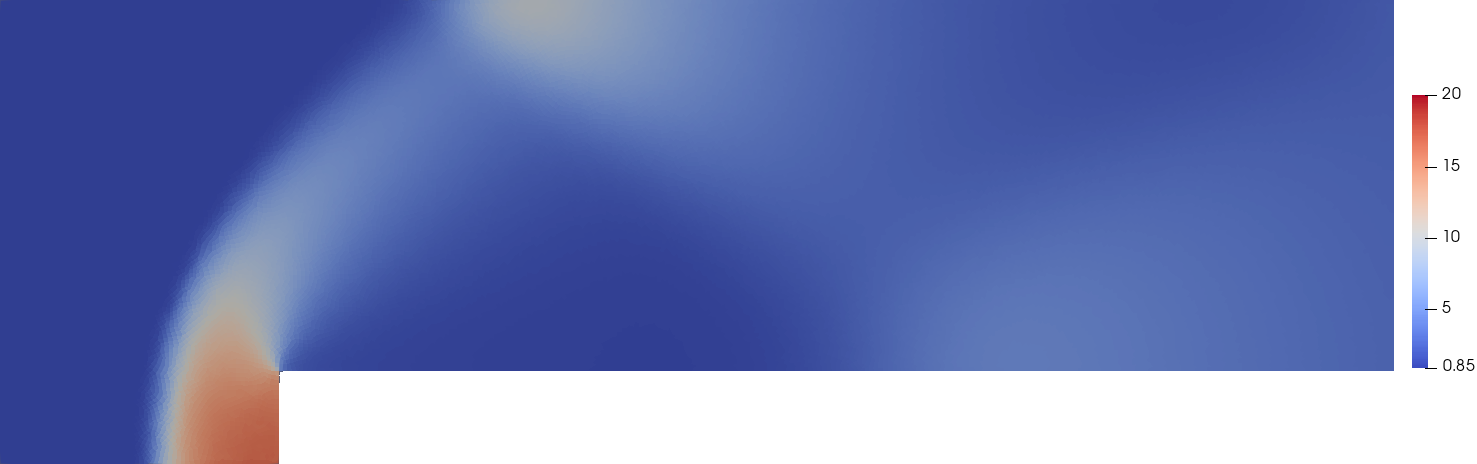}
        \put(96,27){$p$}
    \end{overpic}}\\
    \subfloat[BGK $\tau=10^{-2}$, PFE. \label{fig:0p01QBMEM3Kn0p01PFEK1deltaT0p00025t1}
    ]{\begin{overpic}[width=0.95\textwidth]{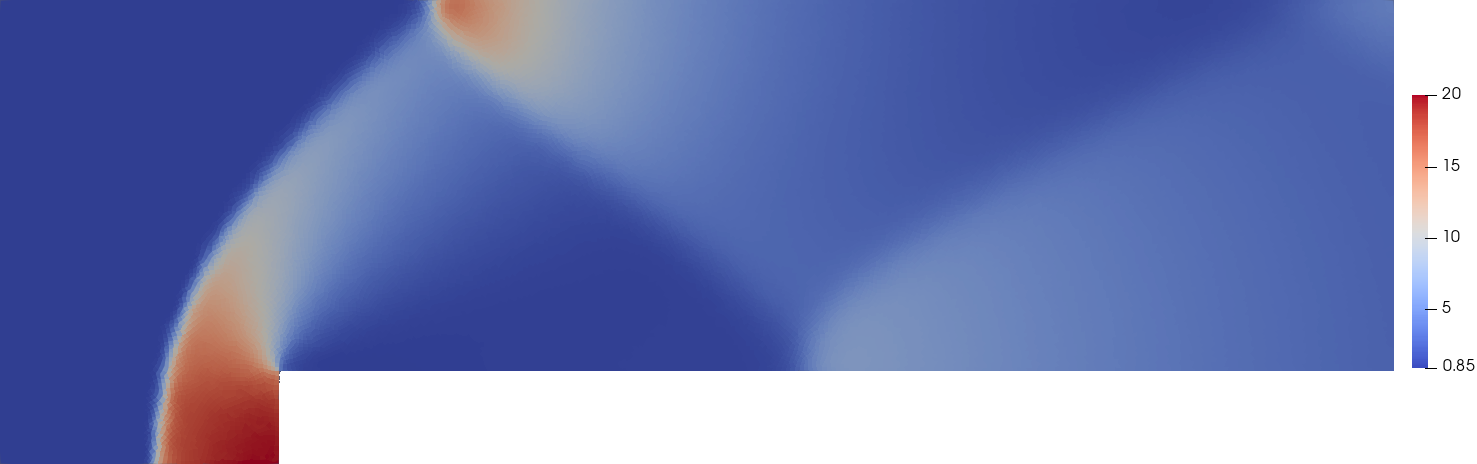}
        \put(96,27){$p$}
    \end{overpic}}\\
    \subfloat[BGK $\tau=10^{-3}$, PFE. \label{fig:0p01QBMEM3Kn0p001PFEK3deltaT0p0001t1}
    ]{\begin{overpic}[width=0.95\textwidth]{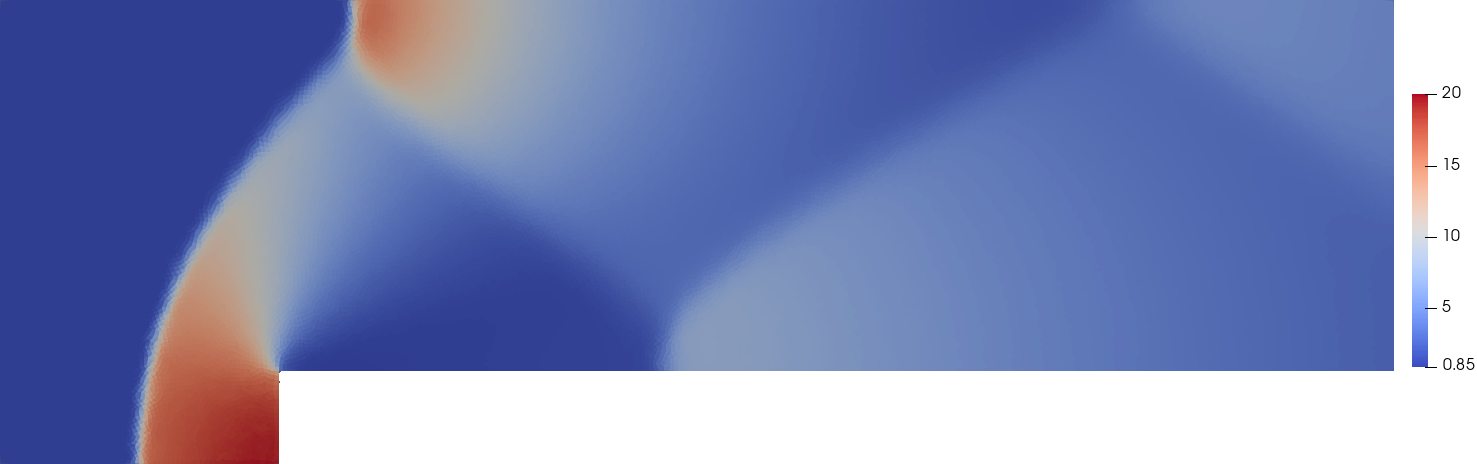}
        \put(96,27){$p$}
    \end{overpic}}\\
    \subfloat[BGK $\tau=10^{-4}$, TPFE. \label{fig:0p01QBMEM3Kn0p0001TPFEK3IntdeltaT0p0001InnerdeltaT0p00001t1}
    ]{\begin{overpic}[width=0.95\textwidth]{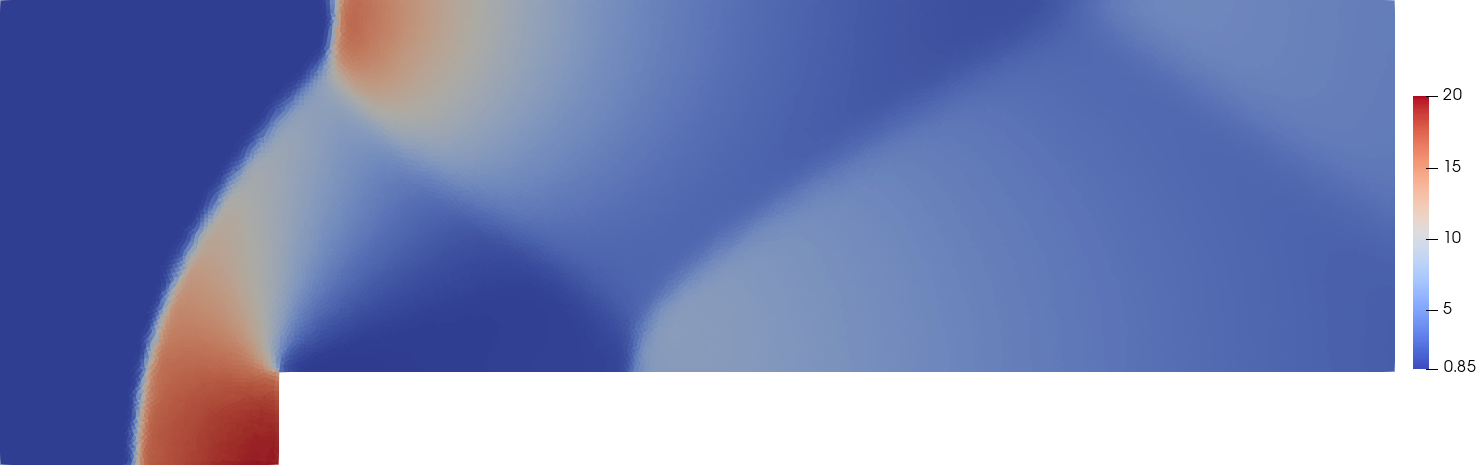}
        \put(96,27){$p$}
    \end{overpic}}
    \end{subfigures}
    \caption{Forward facing step for QBME, BGK collision operator, and varying $\tau$.}
    \label{fig:ffsBGK}
\end{figure}

\subsubsection{Boltzmann collision operator}
For the Boltzmann collision operator \eqref{e:Boltzmann_collision} on the right-hand side of the kinetic equation \eqref{e:BTE}, the spectrum is also extended and exhibits approximately the same maximum and minimum values for the fastest modes. This is why it is possible to choose the same parameters as chosen for the BGK case for the integration of the semi-discrete system. We use $b_0 = \frac{1}{2 \pi}$ for the collision kernel to obtain a collision frequency $\nu = \rho$ in equation \eqref{e:collision_frequency}. The collision frequency then resembles the collision frequency of the BGK operator and the same PI parameters can be used.

The results for the different relaxation times are shown in figure \ref{fig:ffsBTE}. We see a very good agreement of the hydrodynamic case $\tau=10^{-4}$ in figure \ref{fig:0p01BTEQBMEM3Kn0p0001TPFEK3IntdeltaT0p0001InnerdeltaT0p00001t1} compared to the BGK case in figure \ref{fig:0p01QBMEM3Kn0p0001TPFEK3IntdeltaT0p0001InnerdeltaT0p00001t1}. This means that the hydrodynamic limit is computed correctly by the chosen TPFE method. In the transitional and kinetic regime, the position of the shock is slightly further downstream, which is due to the differences between the Boltzmann collision operator and the BGK operator. However, the chosen time stepping methods are able to obtain a stable solution nevertheless.
\begin{figure}[htbp!]
    \centering
    \begin{subfigures}
    \subfloat[Boltzmann $\tau=10^{-1}$, FE. \label{fig:0p01BTEQBMEM3Kn0p1FEt1_pi}
    ]{\begin{overpic}[width=0.95\textwidth]{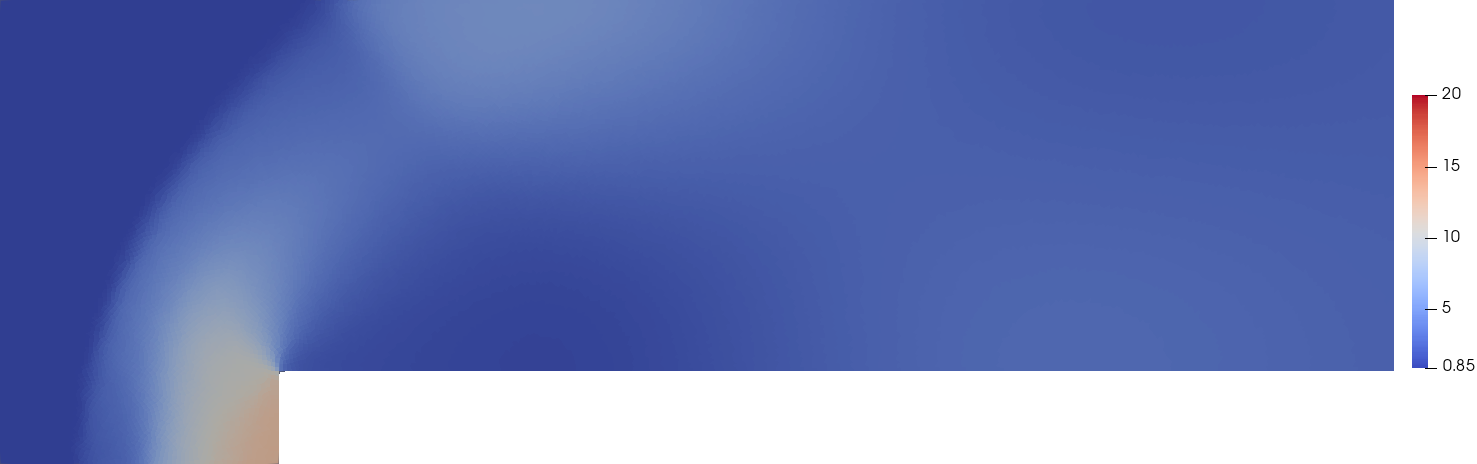}
        \put(96,27){$p$}
    \end{overpic}}\\
    \subfloat[Boltzmann $\tau=10^{-2}$, PFE. \label{fig:0p01BTEQBMEM3Kn0p01PFEK1deltaT0p00025t1}
    ]{\begin{overpic}[width=0.95\textwidth]{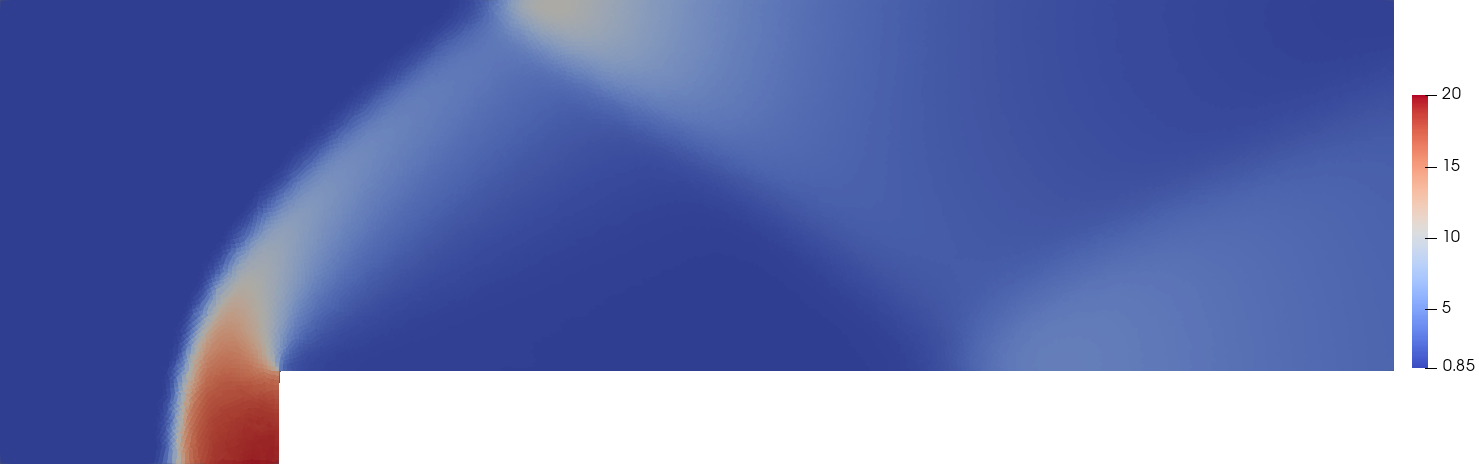}
        \put(96,27){$p$}
    \end{overpic}}\\
    \subfloat[Boltzmann $\tau=10^{-3}$, PFE. \label{fig:0p01BTEQBMEM3Kn0p001PFEK3deltaT0p0001t1}
    ]{\begin{overpic}[width=0.95\textwidth]{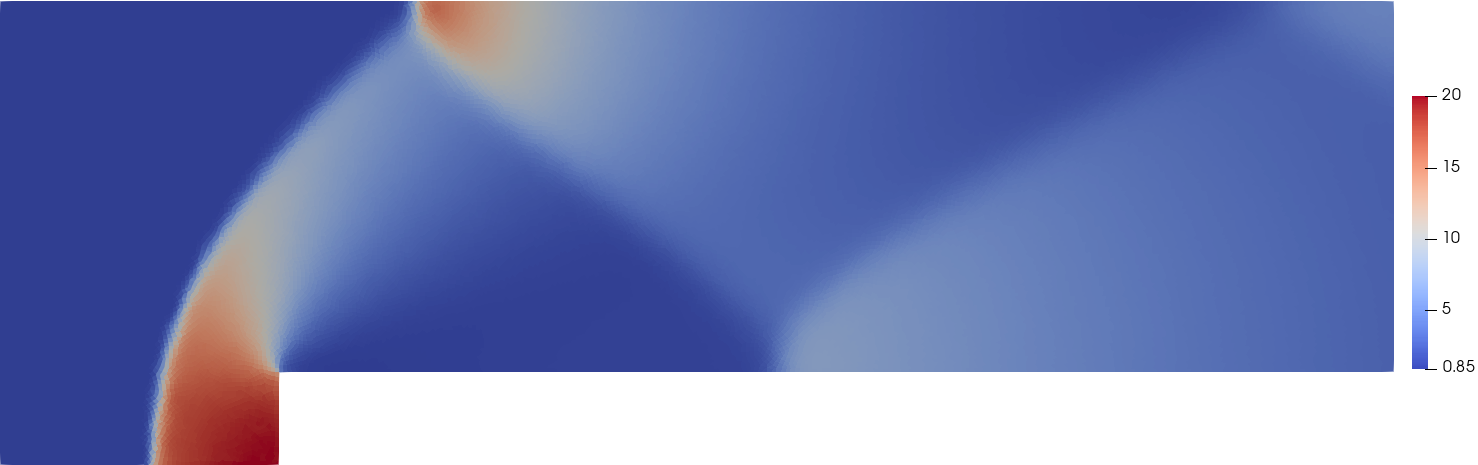}
        \put(96,27){$p$}
    \end{overpic}}\\
    \subfloat[Boltzmann $\tau=10^{-4}$, TPFE. \label{fig:0p01BTEQBMEM3Kn0p0001TPFEK3IntdeltaT0p0001InnerdeltaT0p00001t1}
    ]{\begin{overpic}[width=0.95\textwidth]{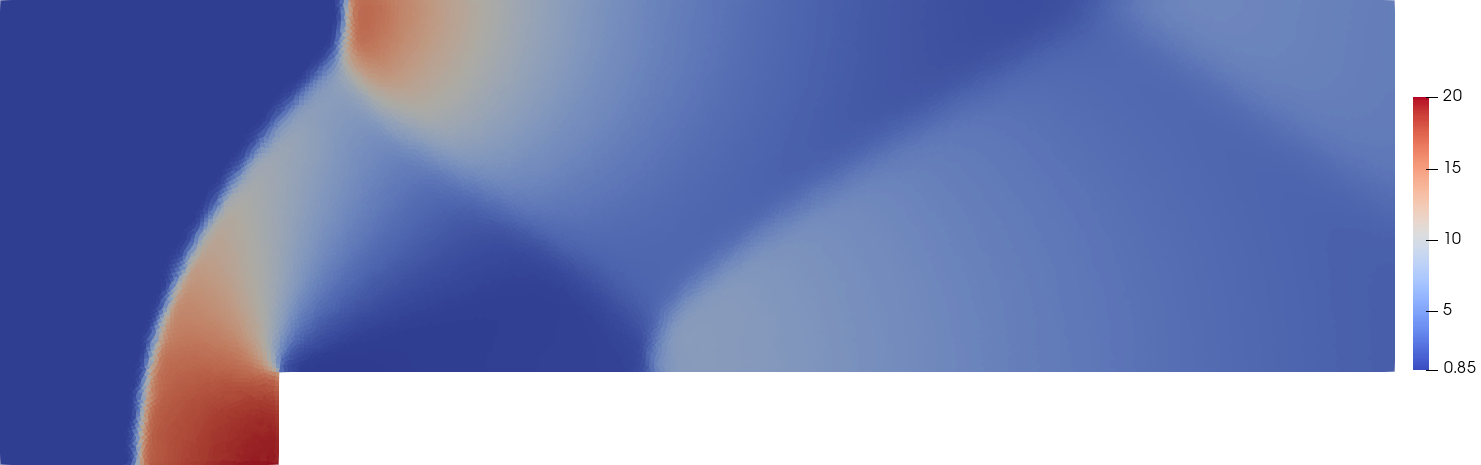}
        \put(96,27){$p$}
    \end{overpic}}
    \end{subfigures}
    \caption{Forward facing step for QBME, Boltzmann collision operator, and varying $\tau$.}
    \label{fig:ffsBTE}
\end{figure}

\subsection{Speedup computation}
The speedup of the different computations shown in this section can be computed with the formulas given in section \ref{sec:speedup} for the PI or TPI, respectively. Table \ref{tab:speedup} shows the speedup of all methods used throughout this paper. No speedup needs to be obtained, where the standard CFL-type macroscopic time step is sufficient to achieve stability. However, a significant speedup can be achieved for values of the relaxation time $\tau$ closer to the hydrodynamic regime. The speedup increases with the number of time steps a naive FE method would require, as these scale linearly with the relaxation time $\tau$, while the PI method uses a constant number of time steps and the number of time steps for the TPI method only depends mildly on relaxation time because the number of levels increases slowly, see section \ref{sec:parameters}. Note that the values in {\color{gray}{gray}} correspond to simulations that are not shown in the figures for conciseness. The parameters of those methods can nevertheless be obtained by a straightforward application of the selection process in section \ref{sec:parameters} followed by the respective speedup computation.
\begin{table}[H]
    \centering
    \caption{Speedup of (T)PI schemes in comparison to standard FE scheme.}
    \label{tab:speedup}
    \begin{tabular}{c||c|c|c|c|c}
      relaxation time $\tau$    & $10^{-2}$ & $10^{-3}$ & $10^{-4}$ & $10^{-5}$ & $10^{-6}$ \\ \hline
      shock tube $\nu = 1$      & 1 & 1 & 1.925 & 19.25 & 192.25 \\ \hline
      shock tube $\nu = \rho$   & 2 & 1.375 & 3.93 & 5.61 & \color{gray}{8.02} \\ \hline
      two-beam $\nu = 1$        & 1 & 1 & 1.925 & \color{gray}{19.25} & 192.25 \\ \hline
      two-beam $\nu = \nu_i$    & 1 & 1 & 1.925 & \color{gray}{9.625} & 96.25 \\ \hline
      forward facing step       & 2 & 2.5 & 6.25 & \color{gray}{15.63} & \color{gray}{39.06} \\
    \end{tabular}
\end{table}
The significant speedup of PI schemes towards the hydrodynamic limit justifies the investigation and use of this explicit time integration method for moment models.
\section{Conclusion}
\label{sec:conclusion}
In this paper, we carried out the first application of explicit PI schemes for hyperbolic moment models out to overcome the stiffness of the right-hand side collision operator in the transitional and hydrodynamic regime.

After introduction of the model equations and both the full Boltzmann collision operator and the simplified BGK collision operator, a linear stability analysis of the linearized Hermite Spectral Method allowed for a detailed understanding of the relation between the fast microscopic scales and the relaxation time as well as the collision frequency of the model. Based on the analysis for constant, piecewise constant, and space-dependent collision frequency, we described the choice of a stable time integrator and outlined explicit steps to choose the parameters in all occurring cases.

In numerical simulations of a 1D shock tube, a 1D two-beam problem, and a 2D forward facing step test case the stability of the algorithms and the convergence of the moment models towards the hydrodynamic limit could be demonstrated. We showed results for high-order spatial discretization in combination with PFE, PRK and TPFE methods with different parameter ranges. Especially the case of an extended eigenvalue spectrum was covered by constructing an A-stable two-stage TPFE method for the moment model.

The combination of moment model and PI methods achieves large accelerations in runtime of up to almost $200$ in comparison to a standard explicit Euler scheme. In addition, fewer variables than for a standard DVM method are necessary, thus combining an efficient model with a high-fidelity solution method.

We could show that the full non-linear model and the non-linear relaxation time show small differences in the relaxation scheme, but converge to the same hydrodynamic limit.

The work in this paper opens up possibilities for many further advancements: In most applications including those in this paper the non-equilibrium is confined to a small portion of the computational domain. This makes an adaptive selection of the time stepping method desirable to allow for an efficient simulation in each respective region. Furthermore, the extension towards a time-adaptive PI method seems promising to speed up time-accurate simulations. After the successful application of PI for moment models, other acceleration methods like the micro-Macro splitting in \cite{Debrabrant2017} can be applied. Lastly, investigation of other collision terms or types of equations might lead to faster simulations for other application cases, too.

\section*{Acknowledgement}
This project has received funding from the European Union’s Horizon 2020 research and innovation programme under the Marie Sklodowska-Curie grant agreement no. 888596.
The first author is a postdoctoral fellow in fundamental research of the Research Foundation - Flanders (FWO), funded by FWO grant no. 0880.212.840.

\appendix
\section{2D QBME model equations}
\label{app:2D_QBME}

The terms of the 2D QBME as derived in \cite{Koellermeier2014a} and written in explicit form first in \cite{Koellermeier2018} are given by

\begin{equation}\label{e4:QBME_x_f_3}
\Vect{A}_x = \quad\quad\quad\quad\quad\quad\quad\quad\quad\quad\quad\quad\quad\quad\quad\quad\quad\quad\quad\quad\quad\quad\quad\quad\quad\quad\quad\quad\quad\quad\quad
\end{equation}
\tiny{
\begin{equation*}
    \setlength{\arraycolsep}{1pt}
    \left(
    \begin{array}{cccccccccc}
     u_x & \rho  & 0 & 0 & 0 & 0 & 0 & 0 & 0 & 0\\
     0 & u_x & 0 & \frac{2}{\rho } & 0 & 0 & 0 & 0 & 0 & 0 \\
     0 & 0 & u_x & 0 & \frac{1}{\rho } & 0 & 0 & 0 & 0 & 0 \\
     \frac{15 f_{3,0}}{2 \rho } & \frac{3 p_1}{2} & 0 & u_x-\frac{15 f_{3,0}}{\tilde{p}} & 0 & -\frac{15 f_{3,0}}{\tilde{p}} & 3 & 0 & 0 & 0 \\
     \frac{5 f_{2,1}}{\rho } & 2 f_{1,1} & p_1 & -\frac{10 f_{2,1}}{\tilde{p}} & u_x & -\frac{10 f_{2,1}}{\tilde{p}} & 0 & 2 & 0 & 0 \\
     \frac{5 f_{1,2}}{2 \rho } & \frac{p_2}{2} & f_{1,1} & -\frac{5 f_{1,2}}{\tilde{p}} & 0 & u_x-\frac{5 f_{1,2}}{\tilde{p}} & 0 & 0 & 1 & 0 \\
     \frac{\tilde{p}{}^3-100 \rho  f_{3,0} \tilde{q}_1}{-8 \tilde{p} \rho ^2} & \frac{5 p_1 f_{3,0}}{\tilde{p}} & \frac{5 f_{1,1}
       f_{3,0}}{\tilde{p}} & \frac{p_2}{\rho }-\frac{25 f_{3,0} \tilde{q}_1}{\tilde{p}^2} & 0 & -\frac{25 f_{3,0} \tilde{q}_1}{\tilde{p}^2} & \frac{15 f_{3,0}}{\tilde{p}}+u_x & 0 & \frac{5 f_{3,0}}{\tilde{p}} & 0 \\
     \frac{25 f_{2,1} \tilde{q}_1}{2 \tilde{p} \rho } & \frac{5 p_1 f_{2,1}}{\tilde{p}} & \frac{5 f_{1,1} f_{2,1}}{\tilde{p}} & -\frac{25 f_{2,1}
       \tilde{q}_1}{\tilde{p}^2}-\frac{2 f_{1,1}}{\rho } & \frac{p_1+3 p_2}{4 \rho } & -\frac{25 f_{2,1} \tilde{q}_1}{\tilde{p}^2} & \frac{15 f_{2,1}}{\tilde{p}} & u_x & \frac{5 f_{2,1}}{\tilde{p}} & 0 \\
     \frac{\tilde{p}^3-100 \rho  f_{1,2} \tilde{q}_1}{-8 \tilde{p} \rho ^2} & \frac{5 p_1 f_{1,2}}{\tilde{p}} & \frac{5 f_{1,1}
       f_{1,2}}{\tilde{p}} & \frac{p_1-p_2}{2 \rho }-\frac{25 f_{1,2} \tilde{q}_1}{\tilde{p}^2} & -\frac{f_{1,1}}{\rho } & \frac{\tilde{p}}{2 \rho }-\frac{25 f_{1,2} \tilde{q}_1}{\tilde{p}^2} & \frac{15 f_{1,2}}{\tilde{p}} & 0 & \frac{5 f_{1,2}}{\tilde{p}}+u_x & 0 \\
     \frac{25 f_{0,3} \tilde{q}_1}{2 \tilde{p} \rho } & \frac{5 p_1 f_{0,3}}{\tilde{p}} & \frac{5 f_{0,3} f_{1,1}}{\tilde{p}} & -\frac{25 f_{0,3}
       \tilde{q}_1}{\tilde{p}^2} & \frac{p_1-p_2}{4 \rho } & -\frac{25 f_{0,3} \tilde{q}_1}{\tilde{p}^2} & \frac{15 f_{0,3}}{\tilde{p}} & 0 & \frac{5 f_{0,3}}{\tilde{p}} & u_x \\
    \end{array}
    \right)
    \setlength{\arraycolsep}{6pt}
\end{equation*}
}
\normalsize
and
\begin{equation}\label{e4:QBME_y_f_3}
\Vect{A}_y = \quad\quad\quad\quad\quad\quad\quad\quad\quad\quad\quad\quad\quad\quad\quad\quad\quad\quad\quad\quad\quad\quad\quad\quad\quad\quad\quad\quad\quad\quad
\end{equation}
\tiny{
\begin{equation*}
    \setlength{\arraycolsep}{1pt}
    \left(
    \begin{array}{cccccccccc}
     u_y & 0 & \rho  & 0 & 0 & 0 & 0 & 0 & 0 & 0 \\
     0 & u_y & 0 & 0 & \frac{1}{\rho } & 0 & 0 & 0 & 0 & 0 \\
     0 & 0 & u_y & 0 & 0 & \frac{2}{\rho } & 0 & 0 & 0 & 0 \\
     \frac{5 f_{2,1}}{2 \rho } & f_{1,1} & \frac{p_1}{2} & u_y-\frac{5 f_{2,1}}{\tilde{p}} & 0 & -\frac{5 f_{2,1}}{\tilde{p}} & 0 & 1 & 0 & 0 \\
     \frac{5 f_{1,2}}{\rho } & p_2 & 2 f_{1,1} & -\frac{10 f_{1,2}}{\tilde{p}} & u_y & -\frac{10 f_{1,2}}{\tilde{p}} & 0 & 0 & 2 & 0 \\
     \frac{15 f_{0,3}}{2 \rho } & 0 & \frac{3 p_2}{2} & -\frac{15 f_{0,3}}{\tilde{p}} & 0 & u_y-\frac{15 f_{0,3}}{\tilde{p}} & 0 & 0 & 0 & 3 \\
     \frac{25 \tilde{q}_2 f_{3,0}}{2 \tilde{p} \rho } & \frac{5 f_{1,1} f_{3,0}}{\tilde{p}} & \frac{5 p_2 f_{3,0}}{\tilde{p}} & -\frac{25 \tilde{q}_2 f_{3,0}}{\tilde{p}^2} & \frac{p_2-p_1}{4 \rho } & -\frac{25 \tilde{q}_2 f_{3,0}}{\tilde{p}^2} & u_y & \frac{5 f_{3,0}}{\tilde{p}} & 0 & \frac{15 f_{3,0}}{\tilde{p}} \\
     -\frac{\tilde{p}^3-100 \rho  f_{2,1} \tilde{q}_2}{8 \tilde{p} \rho ^2} & \frac{5 f_{1,1} f_{2,1}}{\tilde{p}} & \frac{5 p_2
       f_{2,1}}{\tilde{p}} & \frac{\tilde{p}}{2 \rho }-\frac{25 f_{2,1} \tilde{q}_2}{\tilde{p}^2} & -\frac{f_{1,1}}{\rho } & \frac{p_2-p_1}{2 \rho }-\frac{25 f_{2,1} \tilde{q}_2}{\tilde{p}^2} & 0 & \frac{5 f_{2,1}}{\tilde{p}}+u_y & 0 & \frac{15 f_{2,1}}{\tilde{p}} \\
     \frac{25 f_{1,2} \tilde{q}_2}{2 \tilde{p} \rho } & \frac{5 f_{1,1} f_{1,2}}{\tilde{p}} & \frac{5 p_2 f_{1,2}}{\tilde{p}} & -\frac{25 f_{1,2} \tilde{q}_2}{\tilde{p}^2} & \frac{3 \tilde{p}}{4 \rho } & -\frac{25 f_{1,2} \tilde{q}_2}{\tilde{p}^2}-\frac{2 f_{1,1}}{\rho } & 0 & \frac{5 f_{1,2}}{\tilde{p}} & u_y & \frac{15 f_{1,2}}{\tilde{p}} \\
     -\frac{\tilde{p}^3-100 \rho  f_{0,3} \tilde{q}_2}{8 \tilde{p} \rho ^2} & \frac{5 f_{0,3} f_{1,1}}{\tilde{p}} & \frac{5 p_2
       f_{0,3}}{\tilde{p}} & -\frac{25 f_{0,3} \tilde{q}_2}{\tilde{p}^2} & 0 & \frac{p_1}{\rho }-\frac{25 f_{0,3} \tilde{q}_2}{\tilde{p}^2} & 0 & \frac{5 f_{0,3}}{\tilde{p}} & 0 & \frac{15 f_{0,3}}{\tilde{p}}+u_y \\
    \end{array}
    \right)
    \setlength{\arraycolsep}{6pt}
\end{equation*}
}
\normalsize

for $\tilde{p} = p_1 + p_2$, $\tilde{q}_1 = 3 f_{3,0}+f_{1,2}$ and $\tilde{q}_2 = 3 f_{0,3}+f_{2,1}$.

The right-hand side collision term $\Vect{S}$ for the BGK model \cite{Bhatnagar1954} reads
\begin{equation}
\label{e4:vars_collision_full}
    \Vect{S}(\Var_{\textrm{f},3}) = - \frac{1}{\tau} \left( 0, 0, 0, f_{2,0}, f_{1,1}, f_{0,2}, f_{3,0}, f_{2,1}, f_{1,2}, f_{0,3}\right)^T.
\end{equation}

\bibliographystyle{plain}
\bibliography{PI_paper}

\end{document}